\renewcommand\textemdash{\leavevmode\unskip\kern0.8pt\rule[0.19\baselineskip]{8pt}{0.4pt}\kern1pt\ignorespaces}
\newtheorem{corollary}{Corollary}[theorem]
\newtheorem{fact}{Fact}[section]
\begin{document}

\title{Operator hydrodynamics, OTOCs, and entanglement growth in systems without conservation laws}
\author{C.W.~von~Keyserlingk}
\thanks{These authors made roughly equal contributions to this work.}
\affiliation{Department of Physics, Princeton University, Princeton, New Jersey 08544, USA}
\author{Tibor Rakovszky}
\thanks{These authors made roughly equal contributions to this work.}
\affiliation{Technische Universit{\"a}t M{\"u}nchen, Physics Department T42, 85747 Garching, Germany}
\author{Frank Pollmann}
\affiliation{Technische Universit{\"a}t M{\"u}nchen, Physics Department T42, 85747 Garching, Germany}
\author{S.~L.~Sondhi}
\affiliation{Department of Physics, Princeton University, Princeton, New Jersey 08544, USA}
\begin{abstract}
Thermalization and scrambling are the subject of much recent study from
the perspective of many-body quantum systems with locally bounded Hilbert spaces (``spin
chains''), quantum field theory and holography. We  tackle this problem in 1D spin-chains evolving under random local unitary circuits and prove a number of exact results on the behavior of out-of-time-ordered commutators (OTOCs), and entanglement growth in this setting. These results follow from the observation that the spreading of operators in random circuits is described by a ``hydrodynamical'' equation of motion, despite the fact that random unitary circuits do not have locally conserved quantities (e.g., no conserved energy). In this hydrodynamic picture quantum information travels in a front with a `butterfly velocity' $v_{\text{B}}$ that is smaller than the light cone velocity of the system, while the front itself broadens diffusively in time. The OTOC increases sharply after the arrival of the light cone, but we do \emph{not} observe a prolonged exponential regime of the form $\sim e^{\lambda_\text{L}(t-x/v)}$ for a fixed Lyapunov exponent $\lambda_\text{L}$. We find that the diffusive broadening of the front has important consequences for entanglement growth, leading to an entanglement velocity that can be significantly smaller than the butterfly velocity. We conjecture that the hydrodynamical description applies to more generic ergodic systems and support this by verifying numerically that the diffusive broadening of the operator wavefront also holds in a more traditional non-random Floquet spin-chain. We also compare our results to Clifford circuits, which have less rich hydrodynamics and consequently trivial OTOC behavior, but which can nevertheless exhibit linear entanglement growth and thermalization.
\end{abstract}
\maketitle
\section{Introduction}\label{s:intro}
The past decade has seen a great revival of interest in the foundations of quantum statistical mechanics.
It has been driven by theoretical advances involving the long sought demonstration that many-body
localization (MBL)  exists~\cite{Basko06}, ideas from quantum information theory and the study of integrable systems~\cite{Essler16}. It has
been equally driven by experimental advances in the study of cold atomic gases which provide examples
{\it par excellence} of closed macroscopic quantum systems for which the foundational questions of
quantum statistical mechanics are especially acute.~\cite{Bloch2008}  Perhaps the broadest question has to do with 
identifying possible ``ergodic universality classes'' in quantum many-body systems and understanding
their more detailed physics. Of such a potential classification much work has focused on fully MBL systems
which are believed to exhibits a breakdown of statistical mechanics.~\cite{Nandkishore14}

Most recently a great deal of attention has focused on the related question of how to quantify ``scrambling'' in many-body systems\cite{Sekino08,Brown12,Lashkari2013,Shenker2014b,Hosur2016}. In this work, we will use the word scrambling to denote those features of the spreading of quantum information which are quantified by the out-of-time-ordered commutator (OTOC), which has been studied in the SYK model and its descendants\cite{Sachdev93,Kitaev15}, as well as MBL systems\cite{MBLOTOC1,MBLOTOC2,MBLOTOC3,MBLOTOC4,MBLOTOC5}, in field theoretic settings\cite{Stanford15,Stanford2016,Swingle17,Aleiner16,Dubail16}, and numerically in interacting spin-chains\cite{Moessner16,Bohrdt16,Prosen17,Cheryne}. We report some exact analytical results and supporting numerics for \textit{interacting non-integrable} spin-chains that are interesting in the context of scrambling.

In the following we discuss OTOC behavior and entanglement growth (building on work in Refs.~\onlinecite{Nahum16,Abanin17,Mezei16}) for the following three spin-chain models: I) a random circuit (see Eq.~\eqref{eq:circuit_def}) where the 2-site gates are randomly chosen; II) an ergodic Floquet system with nearest-neighbor interactions, defined in Eq.~\eqref{eq:Ising_def} and III) a periodic Clifford circuit defined in Eq.~\eqref{eq:fractalClifford}. Our approach relies on quantifying operator spreading, i.e., how the support of operators changes under Heisenberg picture evolution. We derive analytical formulas for operator spreading in model I), which we support with additional numerics, while for model II) we rely entirely on numerical calculations. Our numerical method is based on the matrix product operator (MPO)~\cite{VerstraeteMPO} representation. Since all three types of time evolution we consider can be represented as a network of 2-site gates (see Fig.~\ref{fig:circuit}), the MPO can be time evolved straightforwardly by using the TEBD algorithm~\cite{VidalTEBD}. Our results for the three models are as follows:

I) In the random circuit model (\secref{s:randomcircuit}) we find that operator spreading can be described by a remarkably simple hydrodynamical picture, which gives rise to a biased diffusion equation. Using this, we find that the typical extent of an operator grows with butterfly velocity $v_{\text{B}}$ which is less than the light cone velocity $v_{\text{LC}}$, while the width of the front broadens diffusively in time (see \figref{fig:front_propagation}). We use these results to derive exact formulas for the OTOC and entanglement growth. The OTOC travels with the same butterfly velocity $v_{\text{B}}$ and its behavior near the front is also sensitive to the diffusive broadening (\eqnref{eq:otoc_intermediate}). At early times, before the arrival of the main front, the OTOC grows exponentially, with an exponent that increases with the initial separation of the two operators (\eqnref{eq:lyaponov}). At long times the OTOC saturates to $1$. This is summarized in Fig.~\ref{fig:random_circuit_otoc}. Our front propagation results also lead to an exact formula for the entanglement growth of an initial product state, from which we can extract an entanglement velocity $v_\text{E}$ (Eq.~\eqref{eq:entanglement_velocity}). We find that the diffusive broadening of the operator front gives rise to the inequality $v_\text{E}<v_{\text{B}}$. This exact result is consistent with general non-rigorous arguments\cite{Jeddi15,Mezei16}, the heuristic operator spreading model of Ref. \onlinecite{Mezei16}, numerous results in holography\cite{Hartman2013,Liu14a,Liu14b}, and the results derived for Clifford circuits in Ref. \onlinecite{Nahum16}. 

II) In \secref{s:kickedIsing} we verify numerically that, for a family of ergodic Floquet circuits, there is a similar diffusively broadening front behavior as observed in the random circuit (see Fig.~\ref{fig:KimHuse_front_widening}). This leads to the tentative conjecture that the diffusive front picture is valid for generic ergodic 1D spin-chains, along with the resulting consequences for OTOC and entanglement dynamics.

III) Finally in \secref{s:Clifford} we compare I) and II) to Clifford circuits. Within such circuits, strings of Pauli operators evolve to other particular strings, rather than superpositions of such; in particular, Clifford circuits do not exhibit a diffusively broadening operator front. We connect this fine tuned nature of local Clifford circuits to the fact that such circuits always exhibit trivial OTOC behavior. Nevertheless, they can still have linear entanglement growth and their local  observables thermalize to infinite temperature. This demonstrates the broader point that the presence of ballistic entanglement growth and thermalization are not sufficient to predict scrambling behavior (i.e., the behavior of the OTOC).

The systems above do not have local conserved quantities (in particular, they do not conserve energy), so it is surprising that hydrodynamics arises in I) and II).  We conjecture that such hydrodynamic behavior is universal, generically appearing in 1D ergodic systems with \textit{local unitary} evolution and a bounded local Hilbert space. [In this connection, our work has obvious parallels with Ref.~\onlinecite{Nahum16}, although as we will discuss in \secref{s:randomcircuit} the hydrodynamics in our case has a different physical origin]. Moreover, our work makes a clear and precise connection between the spreading dynamics of operators, the scrambling behavior captured by the OTOC and other metrics of ergodicity such as entanglement entropy and the late time behavior of local correlation functions (see \appref{ss:longtimerandom}).

\section{Quantifying operator spreading}\label{s:QuantifyOperatorSpread}

Consider a one-dimensional chain of $L$ sites, for which the Hilbert space of a single site is $\mathcal{H}_\text{site}=\mathbb{C}^{q}$.
There exist operators $X,Z$ on the single site Hilbert space obeying
\begin{subequations}
\begin{align}
ZX & = e^{2\pi i/q}XZ\\
Z^{q} & =X^{q} =1\punc{.}
\end{align}
\end{subequations}
These generate a convenient complete basis for all operators on $\mathcal{H}_\text{site}$, namely $\{ \sigma^{\mu}\equiv X^{\mu^{(1)}}Z^{\mu^{(2)}}:\mu\in\mathbb{Z}_{q}^{\otimes2}\} $. Here $\mu$ is shorthand for the doublet $\mu^{(1)},\mu^{(2)}\in\left\{ 0,1,\ldots,q-1\right\} =\mathbb{Z}_{q}$. This basis is orthonormal, such that $\mathrm{tr}(\sigma^{\mu\dagger}\sigma^\nu) / q = \delta_{\mu\nu}$. The operators $\sigma^\mu$ can be regarded as generalizations of Pauli matrices, where the usual Paulis correspond to the $q=2$ case. Generalizing this to the Hilbert space of a 1D chain, $\mathcal{H}_\text{chain}=\left(\mathbb{C}^{q}\right)^{\otimes L}$, a complete orthonormal basis of operators is given by the $q^{2L}$ Pauli \emph{strings}, defined as $\sigma^{\vec{\mu}}\equiv\bigotimes_{r=1}^{L}\sigma_{r}^{\mu_{r}}$, where each string is indexed by a vector $\vec{\mu}\in\left(\mathbb{Z}_{q}^{\otimes2}\right)^{\otimes L}$.

Our goal is to quantify how an initial Pauli string spreads over the space of all Pauli strings under local unitary time evolution. At time $\tau$ the Pauli string $\sigma^{\vec{\mu}}$ becomes
\be\label{eq:cmunudef}
\sigma^{\vec{\mu}}\left(\tau \right)  \equiv U^{\dagger}(\tau)\sigma^{\vec{\mu}}U(\tau)  =\sum_{\nu}c_{\vec{\nu}}^{\vec{\mu}}\left(\tau\right)\sigma^{\vec{\nu}}.
\ee
This defines a set of `operator spread coefficients' $c_{\vec{\nu}}^{\vec{\mu}}\left(\tau\right)\equiv\text{tr}\left(\sigma^{\vec{\nu}\dagger}U^{\dagger}(\tau)\sigma^{\vec{\mu}}U(\tau)\right) / q^L$. The full set of coefficients $\{c_{\vec{\nu}}^{\vec{\mu}}(\tau)\}$ encodes all information regarding the unitary time evolution. However, as we show below, for accessing most physically interesting quantities, such as entanglement entropies\cite{Abanin17,Mezei16,Nahum16}, or out-of-time-order commutators, it is sufficient to consider a more coarse-grained description of the operator spreading. One particularly useful coarse-grained quantity is the total weight on all operators with right endpoint $s$ appearing in $\sigma^\mu(\tau)$, i.e.,
\begin{equation}\label{eq:rho_def}
\rho_R^{\vec{\mu}}(s,\tau) \equiv \sum_{\vec{\nu}} \left|c_{\vec{\nu}}^{\vec{\mu}}\left(\tau\right)\right|^{2} \delta(\text{RHS}(\vec{\nu}) = s),
\end{equation}
where $\text{RHS}(\vec{\nu})$ denotes the rightmost site on which $\vec{\nu}$ is non-zero.~\footnote{A similar quantity was defined in Ref. \onlinecite{Roberts2015}} Note that the `density' $\rho^{\vec{\mu}}_R$ is conserved, i.e., $\sum_s \rho^{\vec{\mu}}_R(s) = 1$ at all times. Motivated by this, we refer to $\rho_R^{\vec{\mu}}$ as the \emph{operator density} of the time evolved Pauli string $\sigma^{\vec{\mu}}$.

In this paper we consider systems where the time evolution can be represented as a circuit of 2-site unitary gates, arranged in the geometry shown in~\figref{fig:circuit}. Sites of the 1D chain are indexed by $s=1,\ldots,L$ while the layers of the circuit are indexed by the variable $\tau$. It is useful to introduce coarse-grained coordinates $x$ and $t$ that label pairs of sites and pairs of layers, respectively, as defined in Eq.~\eqref{eq:newrho} and illustrated in~\figref{fig:circuit}. Note that due to the geometry of the circuit all such models have a well defined light cone velocity $v_\text{LC} = \Delta s/\Delta\tau = \Delta x/\Delta t = 1$, corresponding to the fact that with each successive time-step a local operator can spread at most one additional site in each direction.

In the following we investigate the three models I)-III) discussed in the introduction, all three of which can be represented by local circuits of the kind shown in Fig.~\ref{fig:circuit}. In Sec.~\ref{ss:randomwalkdynamics} we show that for model I) the average of $\rho^{\vec{\mu}}_R$ obeys a classical biased diffusion equation, and use this to derive exact formulas for the behavior of OTOCs and the dynamics of entanglement. In Sec.~\ref{s:kickedIsing} we investigate model II) and find that it shares many features with the random circuit model, such as a broadening of the propagating wavefront, which suggests that the random walk description obtained in Sec.~\ref{s:randomcircuit} has applications in a wider class of ergodic systems. In Sec.~\ref{s:Clifford} we contrast this with model III) where there is no diffusion and $\rho_R^\mu$ remains a delta-function at all times, which corresponds to a non-generic behavior of the OTOC.

 \begin{figure}[h!]
 \centering
  	\includegraphics[width=0.3\textwidth]{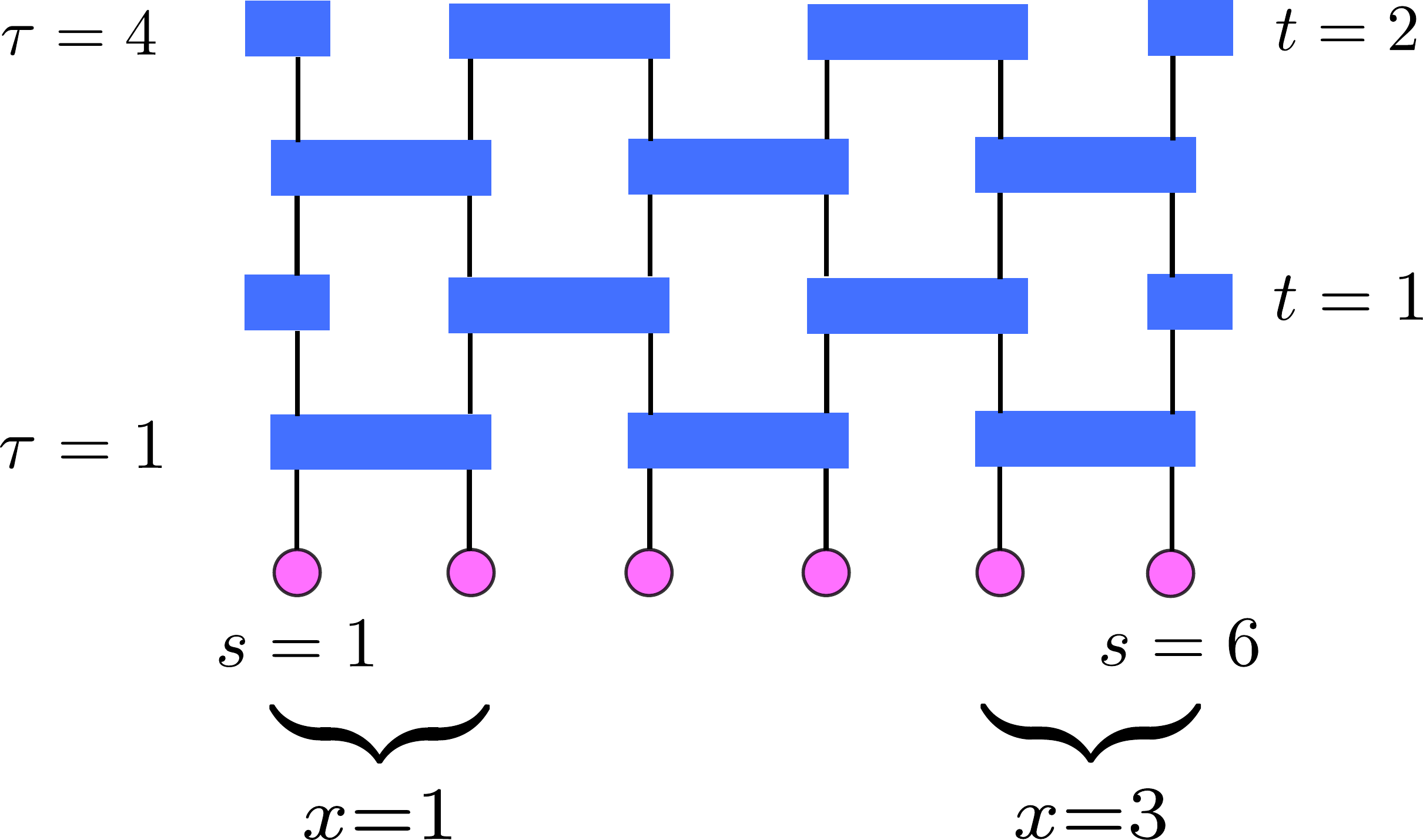} 
\caption{Structure of the local unitary circuits studied in this paper. The on-site Hilbert space dimension is $q$. Each two-site gate is a $q^2 \times q^2$ unitary matrix. For the random circuit model of Eq.~\eqref{eq:circuit_def} each gate is randomly chosen from the Haar distribution. For the Floquet models considered in Sec.~\ref{s:kickedIsing} and~\ref{s:Clifford} the 2-site gates are defined by the Floquet unitaries in Eqs.~\eqref{eq:Ising_def} and~\eqref{eq:fractalClifford}, respectively.}
 \label{fig:circuit}
 \end{figure}

\section{Random circuit model}\label{s:randomcircuit}
Most of the present paper is concerned with one dimensional local random unitary circuits, with the geometry shown in Fig.~\ref{fig:circuit}. Random circuits were also investigated in Ref.~\onlinecite{Nahum16} with regards to the growth of entanglement from an initial product state, albeit with a different geometry where the location of the unitary gates is randomly chosen, instead of the regular arrangement used here. There it was argued that the evolution of entanglement obeys an equation belonging to the KPZ universality class, which determines certain universal exponents that appear in the average value and fluctuations of the entanglement entropy. Here we shift our focus from states to operators and derive exact results for their spreading, for arbitrary on-site Hilbert space dimension. In~\secref{ss:randomcircuit_entanglement} we relate our operator spreading results to the dynamics of bipartite entanglement, as captured by the second R\'enyi entropy, and find no sign of the universal fluctuations observed in Ref.~\onlinecite{Nahum16}. This suggests that the KPZ-like behavior is specific to circuits whose geometry (and not just the individual circuit elements) is random (compare \figref{fig:circuit} above to Fig.~6 of Ref.~\onlinecite{Nahum16}).

The random circuits we discuss are defined as follows. Consider a discrete time evolution, consisting of layers of two-site unitary gates acting on pairs of neighboring sites. Odd numbered layers act on all the odd bonds of the chain while even numbered layers act on even bonds. Each two-site gate is chosen independently from the Haar distribution over $q^2\times q^2$ unitary matrices. The time evolution after an even number of $2t$ layers is given by
\begin{align}\label{eq:circuit_def}
U(t)=\prod_{\tau=1}^{2t,\leftarrow}\prod_{x=1}^{L/2}W(2 x-1+n_{\tau},\tau)
\end{align}
where $n_{\tau}=\frac{1+(-1)^{\tau}}{2}$ and $W(s,\tau)$
is a Haar random two site unitary acting on sites $s,\,s+1$. The product $\prod_{\tau=1}^{2t,\leftarrow}$ is defined to be time ordered. Such a circuit is graphically illustrated in Fig.~\ref{fig:circuit}. 

The primary goal of this work is to quantify the spread of operators under random circuits \eqnref{eq:circuit_def}, and to relate operator spread to entanglement growth. A related question is how correlation functions of local observables behave in this random circuit model in the thermodynamic limit. As we confirm in~\appref{ss:longtimerandom}, such correlations tend to their infinite temperature values at long times, similarly to the case of Floquet ergodic systems\cite{Lazarides14PRL, Lazarides14PRE,Rigol14,Abanin14,Ponte15}. This result holds for any random realization of the circuit.

Focusing on the problem of operator spreading in this random circuit model, we find that the average of the operator density $\rho_R$, defined in Eq.~\eqref{eq:rho_def}, performs a biased random walk, independent of the internal structure of the operator considered. Solving the random walk problem allows us to derive exact formulas for the averages of out-of-time-order commutators and entanglement growth, which we detail in Sec.~\ref{ss:randomcircuitOTOC} and~\ref{ss:randomcircuit_entanglement}, respectively. In \secref{ss:fluct} we quantify numerically the fluctuations between different random realizations of the circuit.

\subsection{Random walk dynamics of operator density}\label{ss:randomwalkdynamics}

In the following we will quantify how operators spread under the time evolution generated by the random circuit defined above in Eq.~\eqref{eq:circuit_def}. We focus on the average of the operator density, defined in Eq.~\eqref{eq:rho_def}, for which we derive an exact equation of motion. Upon solving this equation we find that the operator density moves in a front whose velocity $v_{\text{B}}$ is an increasing function of the on-site Hilbert space dimension and with a front width increasing diffusively in time. 

We start by noting that under Haar averaging the operator spread coefficient $c_{\vec{\nu}}^{\vec{\mu}}(\tau)$ vanishes for any time $\tau\geq1$, provided that $\vec{\mu}$ is non-trivial, since $c_{\vec{\nu}}^{\vec{\mu}}$ and $-c_{\vec{\nu}}^{\vec{\mu}}$ have equal probability. However, the average of its modulus squared, $\overline{|c_{\vec{\nu}}^{\vec{\mu}}(\tau)|^{2}}$, can be non-zero. (An explicit expression for this quantity is written down in \appref{App:cmunu}, using a mapping to a classical Ising model, but we will not require it for the subsequent discussion).

Following \eqnref{eq:rho_def}, we define the average operator density as
\begin{equation}
\overline{\rho_R^{\vec{\mu}}}(s,\tau) \equiv \sum_{\vec{\nu}} \overline{|c_{\vec{\nu}}^{\vec{\mu}}(\tau)|^{2}}\,\delta(\text{RHS}(\vec{\nu}) = s).
\end{equation}
As we show below, this quantity satisfies an equation of motion, \eqnref{eq:random_walk}, which does not depend explicitly on $\vec{\mu}$. Pre-empting this, we drop the explicit $\vec{\mu}$ dependence $\overline{\rho_R^{\vec{\mu}}}\rightarrow \overline{\rho_R}$ to declutter notation. In fact, $\vec{\mu}$ will enter considerations only as an initial condition on the operator density
\begin{equation}\label{eq:rho_initial}
\overline{\rho_R}(s,0) = \delta(\text{RHS}(\vec{\mu}) = s),
\end{equation}
which is the same for all initial operators sharing the same right endpoint.

To understand how $\overline{\rho_R}$ evolves in time, consider the effect of applying a single two-site gate on sites $s$ and $s+1$. There are $q^4-1$ nontrivial operators acting on this two-site Hilbert space. Of these, $q^2-1$ contribute to $\overline{\rho_R}(s,\tau)$ (the ones that are trivial on site $s+1$), while the other $q^2(q^2-1)$ contribute to $\overline{\rho_R}(s+1,\tau)$. Under a two site Haar random unitary transformation all the possible transitions between any of these $q^4-1$ operators have, on average, the same probability\cite{Brown12}. The upshot is that after the application of the unitary gate the density $\overline{\rho_R}$ evolves as
\begin{subequations}\label{eq:hopping_one_gate}
\begin{align}
&\overline{\rho_R}(s,\tau+1) = (1-p) \left[\overline{\rho_R}(s,\tau) + \overline{\rho_R}(s+1,\tau) \right]; \\
&\overline{\rho_R}(s+1,\tau+1) = p \left[\overline{\rho_R}(s,\tau) + \overline{\rho_R}(s+1,\tau) \right],
\end{align}
\end{subequations}
with probabilities $p = \frac{q^2}{q^2+1}$ and $1-p = \frac{1}{q^2+1}$. To apply a similar argument for two subsequent layers of the circuit it is useful to redefine the density by grouping together the pairs of sites on which the first layer of the circuit acts. We abuse notation and denote this quantity as
\begin{equation}\label{eq:newrho}
\overline{\rho_R}(x,t) \equiv  \overline{\rho_R}(s=2x-1,\tau=2t) + \overline{\rho_R}(s=2x,\tau=2t),
\end{equation}
where we now only consider the value of the operator density at even time steps $\tau = 2t$. Applying Eq.~\eqref{eq:hopping_one_gate} for two layers we arrive at the equation 
\begin{multline}\label{eq:random_walk}
\overline{\rho_R}(x,t+1) = 2p(1-p)\,\overline{\rho_R}(x,t) + \\
+ p^2\,\overline{\rho_R}(x-1,t) + (1-p)^2\,\overline{\rho_R}(x+1,t).
\end{multline}
Thus the right endpoints of Pauli strings perform a biased random walk on the lattice, where in each step they move to the right with probability $p^2$, to the left with probability $(1-p)^2$, and stay on the same site otherwise. A feature of the above equation is that the time evolution of $\overline{\rho_R}$ is independent of the internal structure of the operator and thus the solution $\overline{\rho_R}(x,t)$ will be the same for all initial Pauli strings, modulo a shift $x\to x-x_0$ where $x_0$ is defined by the right endpoint of the initial string.

The result of the random walk process outlined above is a front that propagates to the right from its initial position $x_0$ as $\langle x\rangle - x_0 = v_{\text{B}} t$ with a butterfly velocity $v_{\text{B}} = p^2 - (1-p)^2 = \frac{q^2-1}{q^2+1}$. Thus the speed at which the operator weight travels is smaller than the light cone velocity: $v_{\text{B}} < v_\text{LC} = 1$. This resonates somewhat with the result of Ref.~\onlinecite{Roberts16}. The width of the front increases in time as $\langle x^2 \rangle - \langle x\rangle^2 = 2Dt$ with diffusion constant $D = \sqrt{1-v_{\text{B}}^2} / 4 = \frac{q/2}{q^2 + 1}$. Note that in the limit $q\to\infty$ the `particle' described by $\overline{\rho_R}(x,t)$ hops to the right with probability $1$ in each step, and consequently the front becomes infinitely sharp with velocity $v_{\text{B}}\to v_\text{LC}=1$.

The total weight of left endpoints, $\overline{\rho_L}(x,t)$, obeys a similar equation except that it propagates to the left with velocity $-v_{\text{B}}$, while diffusing at the same rate, as shown in Fig.~\ref{fig:front_propagation}. This means that at time $t$ the vast majority of quantum information initially stored in $\sigma^{\vec{\mu}}$ with left (right) endpoint $x_l$ ($x_r$) is carried by operators with support $[x_l-v_{\text{B}}t,x_r+v_{\text{B}}\tau]$, but where the precise position of either endpoint can be uncertain within a region of width $\Delta x \sim \sqrt{D t}$.  
 \begin{figure}[h!]
 \centering
  	\includegraphics[width=0.485\textwidth]{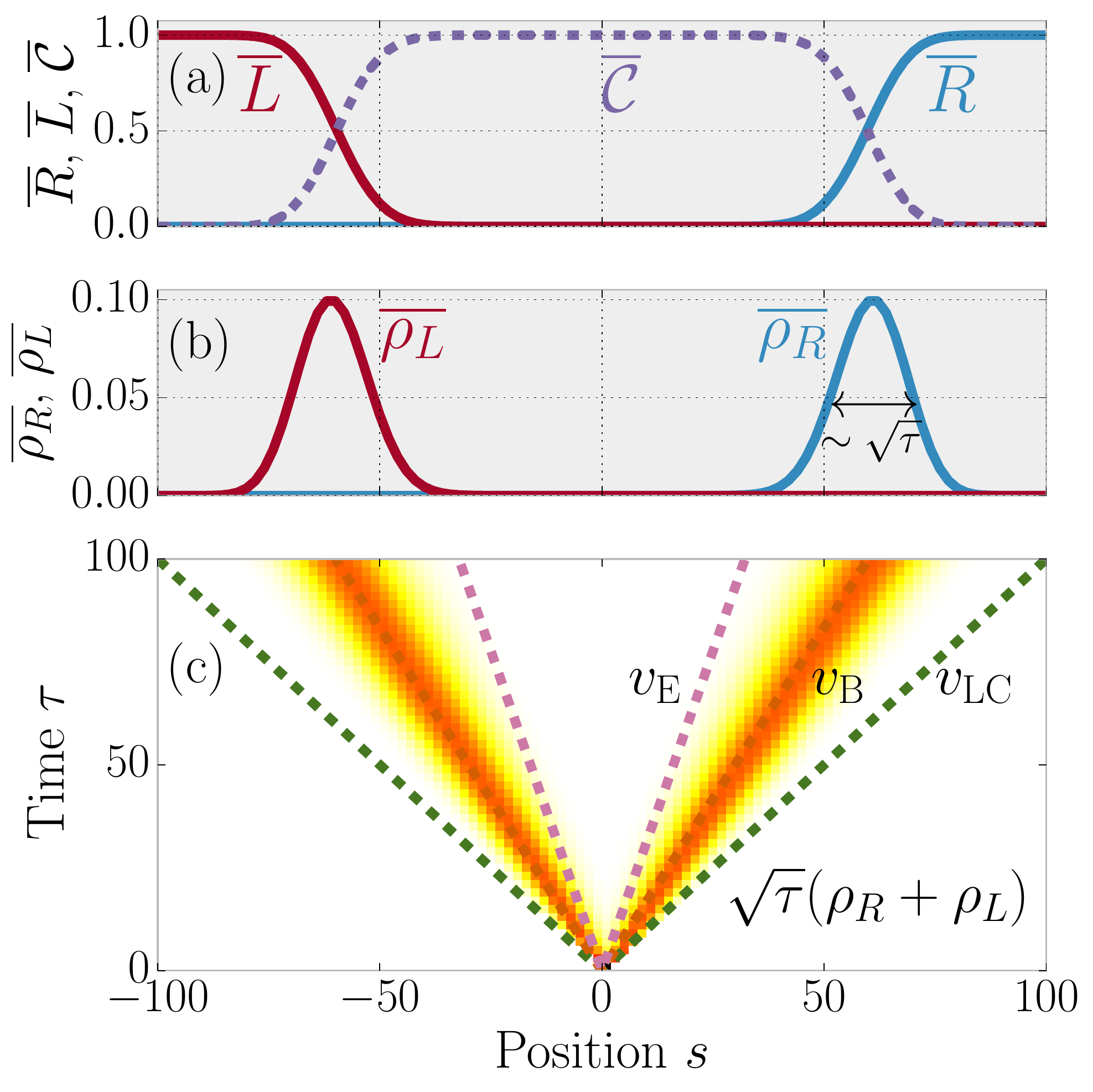} 
\caption{Spreading of a one-site operator averaged over random unitary circuits. $\overline{\rho_R}(s,\tau)$ ($\overline{\rho_L}(s,\tau)$) is the total weight carried by Pauli strings with right (left) endpoint at site $s$ at time $\tau$. Figure (c) shows the sum of these two functions (multiplied by $\sqrt{\tau}$ to show the position of the front more clearly). Almost all the weight is carried by operators with endpoints at the two fronts propagating out from the initial site with speed $v_{\text{B}} = \frac{q^2-1}{q^2+1}$. These fronts in turn broaden diffusively in time as $~\sqrt{\tau}$. The two other velocity scales, the light cone velocity $v_\text{LC}$ and the entanglement velocity $v_\text{E}$ (see Eq.~\eqref{eq:entanglement_velocity}) are also indicated, satisfying $v_\text{E} < v_{\text{B}} < v_\text{LC}$. The values of $\overline{\rho_R}$ and $\overline{\rho_L}$ after 100 layers of the circuit are shown in Fig. (b). Fig. (a) shows the integrated operator weights $\overline{R}(s)$ ($\overline{L}(s)$), denoting the total weight left (right) of site $s$, along with the OTO commutator $\mathcal{C}(s,\tau)$. The OTOC saturates to 1 inside the front and has the value $1/2$ exactly at $\tau = s/v_{\text{B}}$}
 \label{fig:front_propagation}
 \end{figure}

We can find the full distribution of $\overline{\rho_R}(x,t)$ using a standard generating functional method. In the rest of this section we will use coordinates relative to the initial position of the front, i.e. $x-x_0 \to x$. The solution to Eq.~\eqref{eq:random_walk} than reads
\begin{equation}\label{eq:rho_solution}
\overline{\rho_{R}}(x,t) = \frac{q^{2(t+x)}}{(1+q^2)^{2t}} {2t \choose t+x}.
\end{equation}
In the scaling limit $t,x\to\infty$ but keeping $x/t \approx v_{\text{B}}$ fixed this becomes (using Stirling's approximation)
\begin{equation}
\overline{\rho_{R}}(x=v_{\text{B}}t+O(\sqrt{t}))=\frac{1}{\sqrt{\pi(1-v_{\text{B}}^{2})t}}e^{-\frac{(x-v_{\text{B}}t)^{2}}{(1-v_{\text{B}}^{2})t}}\label{eq:rhonearfront},
\end{equation}
so that the traveling front has the shape of a Gaussian, as one would expect from the solution of the continuum limit of the lattice diffusion equation Eq.~\eqref{eq:random_walk}.

As we shall see in the next section, it is also useful to compute the total weight of all Pauli strings contained entirely to the left of position $x$. This quantity, which we denote by $R(x),$ is given by
\begin{align}
R(x) & \equiv\sum_{y\leq x}\rho_{R}(y).
\label{eq:Rdefn}
\end{align}
Around the position of the front, where $x\approx v_{\text{B}}t$, we can integrate Eq. \eqref{eq:rhonearfront}
to obtain 
\begin{equation}
\overline{R}(x=v_{\text{B}}t+O(\sqrt{t}))\approx\frac{1}{2}\left[\text{erf}(\frac{x-t v_{\text{B}}}{\sqrt{t(1-v_{\text{B}}^{2})}})+1\right],\label{eq:frontRscaling}
\end{equation}
where $\text{erf}(x)$ is the error function.

Later on we will also need an approximation for $R$ well away from the front. Using
the fact that in the large $t$ limit $\overline{\rho_{R}}(x,t)$ increases sharply with $x$ for $x/t<v_{\text{B}}$, the sum in Eq.~\eqref{eq:Rdefn} is dominated by its largest term
(i.e. $y=x$). Using the fact that $\overline{R}\left(x\geq v_\text{LC} t \right)=1$ we can similarly approximate $\overline{R}$ for $x/t>v_{\text{B}}$, to obtain
\begin{equation}\label{eq:Rapprox}
\overline{R}(x)\approx(1-2\overline{\rho_{R}}(x))\Theta(x-v_{\text{B}}t)+\overline{\rho_{R}}(x),
\end{equation}
where $\Theta$ is the Heaviside step function. This result is accurate up to multiplicative $O(1)$ constants when $|x-v_{\text{B}}t|/t = O(1)$ in the large $t$ limit. See the discussion in \appref{app:approx} for a precise statement and derivation of \eqnref{eq:Rapprox}. 

Using our results for the coarse-grained density $\overline{\rho_R}(x,t)$ we can also write a formula for the density in terms of the site coordinate $s$. Note that due to Eq.~\eqref{eq:hopping_one_gate} the ratio $\overline{\rho_R}(s=2x+1) / \overline{\rho_R}(s=2x) = q^2$ is fixed at any time $\tau = 2t$. Using this, the density on site $s$ after applying an even number of layers becomes
\begin{equation}\label{eq:rho_realspace}
\overline{\rho_{R}}(s=2x+n,\tau = 2t) = \frac{q^{2(t+x-1+n)}}{(1+q^2)^{2t}} {2t-1 \choose t+x-1},
\end{equation}
where $n=0,1$. We can use Eq.~\eqref{eq:rho_realspace} to derive the total operator weight left of site $s$, i.e. $R(s) = \sum_{r\leq s}\rho(r)$, which, as we will see in the next section, is closely related to the OTOC between sites $0$ and $s$.

\subsection{Behavior of out-of-time-order commutators}\label{ss:randomcircuitOTOC}
We relate our results for the time evolution of operator weights to another oft used measure of information spreading in many-body systems, the so-called out-of-time-order commutator (OTOC)\cite{Shenker2014a,Shenker2014b,Shenker2015,Roberts2015,Stanford15,Maldacena2016,Hosur2016}. For concreteness consider the following OTOC between two Pauli operators separated by distance $s$ (in this section we work in a shifted co-ordinate system where one of the Pauli operators resides at site $0$)
\begin{align}\label{eq:OTOC_def}
\mathcal{C}(s,\tau) & \equiv  \frac{1}{2}\langle\psi\mid\left|\left[Z_{0}\left(\tau \right),Z_{s}\right]\right|^{2}\mid\psi\rangle \nonumber \\
 & =  1-\text{Re}\langle\psi\mid Z_{0}\left(\tau \right)Z_{s}Z_{0}^{-1}\left(\tau \right)Z_{s}^{-1}\mid\psi\rangle
\end{align}
where $s$ and $\tau$ are the original time/lattice co-ordinates (as opposed to the coarse grained co-ordinates $t,x$ below \eqnref{eq:newrho}). We will show how the OTOC $\mathcal{C}(s,\tau) $ behaves in the scaling limit $\tau\rightarrow \infty$, with $\kappa \equiv s/\tau$ held fixed. We we detail below, for $s$ outside of the light cone ($1<\kappa$) it is zero. As $s$ enters the light-cone ($\kappa<1$  and close to $1$) it increases exponentially.   When $s$ is near the operator front ($\kappa = v_{\text{B}}<1$) the OTOC becomes $O(1)$. After the front has passed ($\kappa < v_{\text{B}}$) the front has passed, the OTOC exponentially saturates to the value $1$ with an exponent that is independent of $s$. See \figref{fig:random_circuit_otoc} for a summary. 

Let $\overline{\mathcal{C}}$ denote the average of the OTOC over all unitary circuits with geometry shown in \figref{fig:circuit}. Note that due to the averaging this quantity is independent of the choice of Pauli operator, i.e., it would be the same if we replaced either or both operators in \eqnref{eq:OTOC_def} with another local Pauli different from $Z$. We will be concerned with the second term in Eq.~\eqref{eq:OTOC_def} which equals

\begin{eqnarray}\label{eq:OTOC_from_c}
1-\overline{\mathcal{C}}(s,\tau) & = & \sum_{\mu\nu}\overline{c_0^{\mu}(\tau)c_0^{*\nu}(\tau)}\,\text{Re}\langle\psi\mid\sigma^{\mu}Z_{s}\sigma^{\dagger\nu}Z_{s}^{-1}\mid\psi\rangle \nonumber\\
 & = & \sum_{\mu}\overline{\left|c_0^{\mu}(\tau)\right|^{2}}\cos\theta_{\mu,Z_{s}},
\end{eqnarray}
where $e^{\theta_{\mu,Z_{s}}}$ is  a $q^{\text{th}}$ root of unity arising from commuting $\sigma^{\mu}$ past $Z_{s}$, and $c_0^{\mu}(\tau)$ are the operator spreading coefficients of $Z_0(\tau)$. Notice that the Haar average forces $\mu=\nu$ which removes all dependence on the particular initial state $\psi$\footnote{To see that $\overline{c^\mu c^{\nu*}(\tau)} \propto \delta_{\mu \nu}$: If $\mu\neq\nu$ there exists a single site Pauli $\sigma^\alpha_r$ such that the group commutators $[\sigma^{\alpha}_r:\sigma^{\mu,\nu}]=e^{i \theta_{\mu,\nu}}$ are unequal. Using the invariance of the two-site Haar measure under multiplication by single site operators, this implies $\overline{c^\mu c^{\nu*}(\tau)} = e^{i(\theta_{\mu} - \theta_{\nu})} \overline{c^\mu c^{\nu*}(\tau)}$ from whence the result follows.}. In particular, the average OTOC value in any state, pure or mixed, will be identical to the average OTOC value at infinite temperature, i.e., $\text{tr} (\frac{1}{2}|[Z_{0}(\tau),Z_{s}]|^{2})/2^L$. At this point, we can use \eqnref{eq:OTOC_from_c} and \eqnref{eq:Cmunufull} to write an exact closed from expression for the OTOC. However, instead of doing that, we will write a more manageable asymptotic expression for \eqnref{eq:OTOC_from_c} using simpler results from \secref{ss:randomwalkdynamics}.

To perform the sum over Pauli strings in Eq.~\eqref{eq:OTOC_from_c}, we first need to prove the following statement: $\overline{|c_0^{\vec{\mu}}(\tau)|^2}$ depends \emph{only on the position of the two endpoints of the string} $\vec{\mu}$. The proof goes as follows. First, it is easy to verify that under Haar averaging $|c_{\vec{\nu}}^{\vec{\mu}}(\tau)|^2 = |c_{\vec{\mu}}^{\vec{\nu}}(\tau)|^2$, which means that the average probability of the one-site operator $Z_0$ evolving into a specific string $\vec{\mu}$ is the same as the probability of string $\vec{\mu}$ evolving into $Z_0$. In the random walk picture this latter process corresponds to both left and right endpoints ending up on site $0$ at time $\tau$ during their respective random walks. As we noted previously, these random walks are independent of the internal structure of the initial string. Thus $\overline{|c_0^{\vec{\mu}}(\tau)|^2}$ depends only on where the two endpoints of $\vec{\mu}$ are located. We confirm this argument more concretely with an explicit expression for such operator spread coefficients in \appref{App:cmunu}.

The above statement has important consequences for the OTOC. If site $s$ lives in the support of $\vec{\mu}$ then the contribution to Eq.~\eqref{eq:OTOC_from_c} coming from the strings with the same support as $\vec{\mu}$ have an equal weight for each possible value $\theta_{\mu,Z_{s}} \in \frac{2\pi}{q}\{1,\ldots,q\}$, so that the cosine term averages to zero. The remaining part is the total weight due to Pauli strings which are supported on intervals that do not contain site $s$ (along with some corrections for Pauli strings which border on site $s$). Deferring the full justification \appref{app:otoc}, the upshot is that provided $\kappa>0$ in the $\tau\rightarrow\infty$ limit, the OTOC behaves as
\begin{equation}\label{eq:OTOCtoR}
\overline{\mathcal{C}}(s,\tau) \approx 1-\overline{R}\left(s-1,\tau\right) +  q^{-2}\overline{\rho_{R}}\left(s,\tau\right)\punc{.}
\end{equation}
up to exponentially small corrections in $\tau$. Hence the OTOC physics is directly related to the operator density, and changes appreciably at the operator front $s=v_{\text{B}}t$, as we show in \figref{fig:random_circuit_otoc}.

Let us now summarize the behaviour of the OTOC as a function of space and time, as parameterized by the ratio $\kappa \equiv s/\tau$ and taken in the limit $\tau\rightarrow\infty$. We  distinguish four regimes of OTOC behaviour which we illustrate in \figref{fig:random_circuit_otoc}. 

\begin{enumerate}

\item OTOC trivial at early times ($1< \kappa$): In this regime the events $(\tau,0), (0,s)$ are causally disconnected, so the commutator in \eqnref{eq:OTOC_def} (and hence the OTOC) is exactly zero. 

\item Early OTOC growth ($v_{\text{B}} < \kappa <1 $): This regime describes the behaviour after site $s$ has entered the light cone, but before it encounters the main operator front. Here we approximate the OTOC using \eqnref{eq:Rapprox}, so that $\overline{\mathcal{C}}(s,\tau) \approx c_1 \overline{\rho_{R}}(s-1,\tau)+\overline{\rho_{R}}(s,\tau)$, where $c_1>1$ is bounded in the $s,\tau\rightarrow \infty$ limit. Fortunately a simple closed form expression already exists for $\overline{\rho_{R}}$, namely \eqnref{eq:rho_realspace}. We obtain a more convenient expression for the initial OTOC growth by expanding \eqnref{eq:rho_realspace} near the light cone in the $\delta^2 /s \rightarrow 0$ limit, where $\delta\equiv \tau-s$  is the distance between $s$ and the light cone
\begin{equation}\label{eq:lyaponov}
\overline{\mathcal{C}}(s,\tau) \approx   e^{\frac{1}{2} \delta  \log (\frac{\gamma s }{\delta })-\frac{1}{6 \delta }} \times \left(\frac{q^2}{q^2+1}\right)^s  \frac{(1+q^2)\sqrt{\delta}} {2s\sqrt{\pi   }} .
\end{equation}
up to multiplicative O(1) constants, where $\gamma=e\left(1-v_{\text{B}}^{2}\right)/2$. This formula demonstrates that the OTOC will increase with an exponent $\lambda \sim \log  s$ for $0<\delta \ll s$. Due to its dependence on $s$, and its limited range of validity, it is unclear whether this should be viewed as a Lyapunov exponent as in Ref. \onlinecite{Maldacena2016}. Note that the exponential increase occurs in a regime where the overall scale of the OTOC is still exponentially small in $s$. In the regime where the OTOC increases to an $O(1)$ value (i.e., when the operator front hits, see next point) its behavior is \emph{not} exponential. Furthermore, we note that $\gamma\sim 1/q^{2}$ for large $q$, such that the regime in which the exponential behavior can be observed becomes \emph{smaller} in the large $q$ limit.

\item Near the front ($ |\kappa -  v_{\text{B}}| =  O(1/\sqrt{\tau})$): As mentioned, the above approximation breaks down when the main front, which we recall travels at speed $v_{\text{B}}$ and has width $\sim \sqrt{\tau}$, arrives at site $s$. In this intermediate regime, we estimate the OTOC by combining \eqnref{eq:OTOCtoR} and \eqnref{eq:frontRscaling}
\begin{equation}\label{eq:otoc_intermediate}
\overline{\mathcal{C}}(s,\tau) \approx\frac{1}{2} \text{erfc}\left(\frac{s-v_{\text{B}}\tau}{\sqrt{2\tau(1-v_{\text{B}}^{2})}}\right)\punc{.}
\end{equation}
This formula describes the behaviour of the OTOC in the regime when it increases from a value exponentially small in $s$ to an $O(1)$ number.

\item Late times ($0<\kappa< v_{\text{B}}$): After the main front has passed the OTOC relaxes exponentially to $1$. Expanding Eq.~\eqref{eq:Rapprox} for fixed $s-v_{\text{LC}} \tau$ and large $\tau$ we find that the OTOC in this late time regime is 
\begin{equation}\label{eq:otoc_late}
1-\overline{\mathcal{C}}(s,\tau)  \approx\frac{(1+q^{-2})q^{s}}{\sqrt{8\pi \tau}}\left(\frac{2q}{1+q^{2}}\right)^{\tau}
\end{equation}
Thus the OTOC decays to its equilibrium value with an exponent $\log(\frac{1+q^{2}}{2q})$.

\end{enumerate}
 \begin{figure}[h!]
 \centering
  	\includegraphics[width=0.38\textwidth]{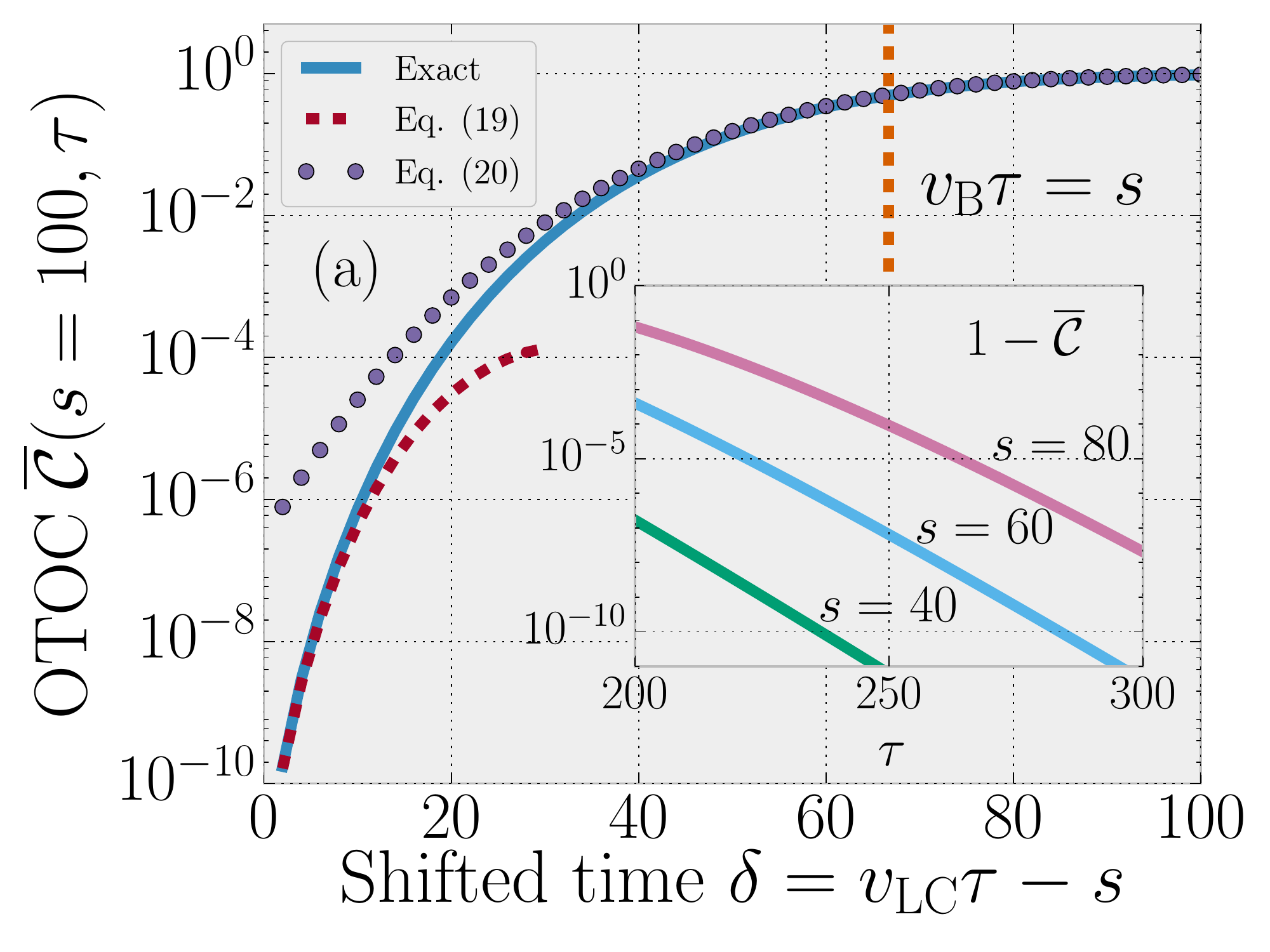} 
  	\includegraphics[width=0.38\textwidth]{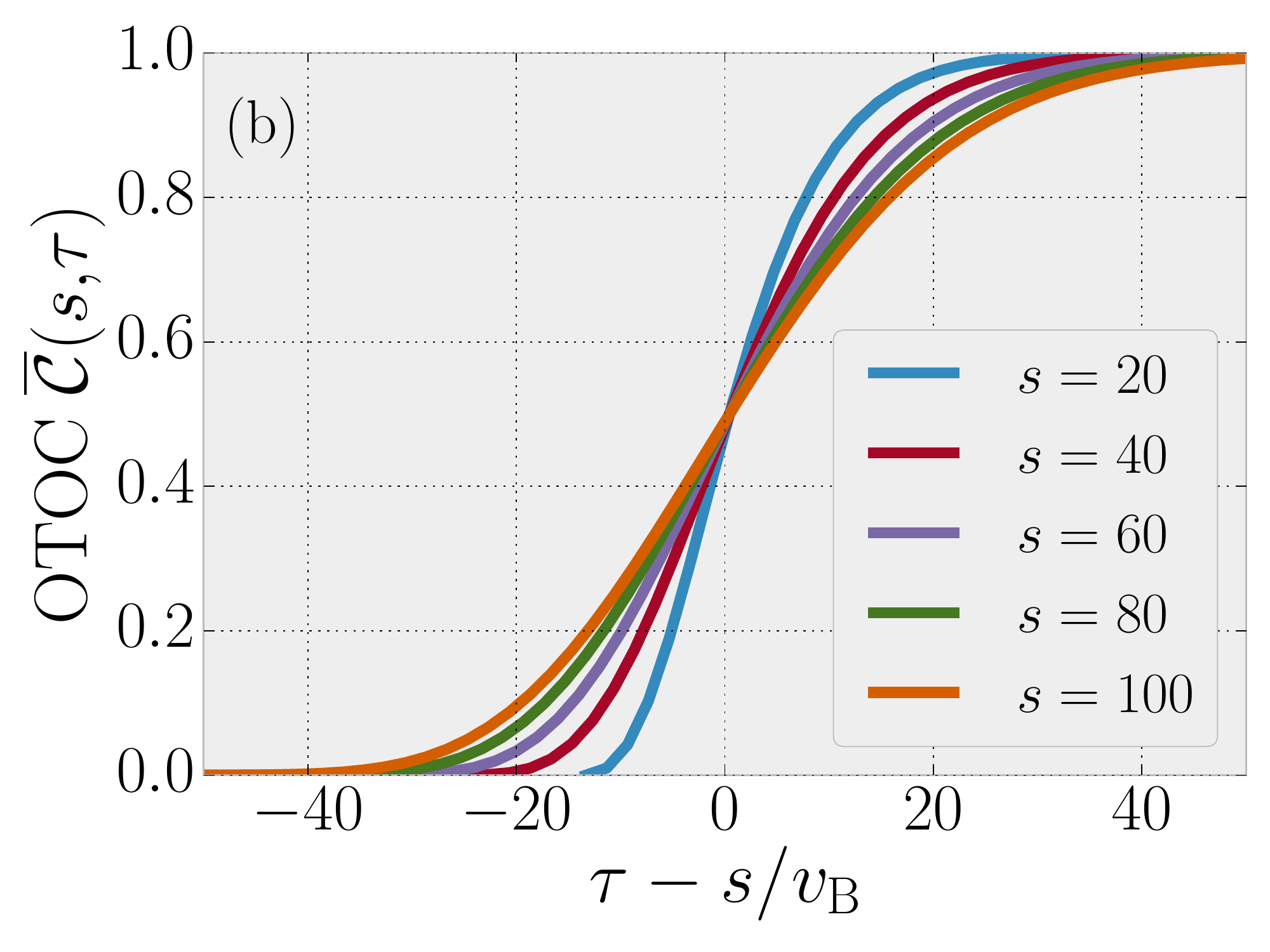}
\caption{Time dependence of the average OTOC in the random circuit model. (a) Different time regimes for fixed separation $s=100$. The exact result for the OTOC follows Eq.~\eqref{eq:lyaponov} after the light cone hits site $s$. The behaviour than goes over to regime described by Eq~\eqref{eq:otoc_intermediate} after the front with speed $v_{\text{B}}$ arrives. The inset shows the exponential decay of the OTOC to its final value $1$, as described by Eq.~\eqref{eq:otoc_late}, for different separations. (b): scaling collapse of the OTOC at the front.}
 \label{fig:random_circuit_otoc}
 \end{figure}
 
\subsection{Fluctuations from circuit to circuit}\label{ss:fluct}

The results discussed above concern quantities averaged over many different random circuits with the same geometry but different choices of two-site gates. The question remains regarding whether these average quantities are also `typical', i.e. how large are the fluctuations between different realizations of the random circuit. In this section we investigate this problem numerically. Our numerical method relies on representing the operator $Z_0(t)$ as a matrix product operator (MPO), which allows us to apply the two-site unitary gates efficiently. Two layers of the random circuit can be applied by just a single step of the TEBD algorithm, which allows us to go up to bond dimension $\chi = 20000$. Both the infinite temperature OTOC and the total operator weight contained in an arbitrary subregion can be extracted straightforwardly from the MPO representation (both calculations are similar to computing the overlap of two matrix product states, but in the computation of $R(s)$ only the legs corresponding to sites $\leq s$ are contracted).

To quantify the fluctuations we look at an ensemble of 100 random circuit realizations (which is enough to reliable reproduce the exact average quantities, see~\figref{fig:fluctuations}) with on-site Hilbert space dimension $q=2$ and compute a) the OTOC $\mathcal{C}(s,\tau) = 1-\mathrm{Tr}[(Z_0(\tau)Z_s)^2 ]/ 2^L$ and b) the total operator weight $R(s,\tau)$ of $Z_0(\tau)$ contained within the region left of site $s$. Both $R(s,\tau)$ and $\mathcal{C}(s,\tau)$ are functions of the distance $s$ and the number of layers $\tau$. We find that for both quantities, the circuit-to-circuit fluctuations are largest at the traveling wavefront and become smaller deep behind it. This is shown in Fig.~\ref{fig:fluctuations}. This also shows that there is a well defined front for the information propagation in each individual circuit.
 
We also find that the the fluctuations decrease in time. Fig.~\ref{fig:fluctuations} (c) shows the standard deviation of the weight $R(s)$ for different times. We find that the maximum of this standard deviation over all values of $s$ decreases in time, approximately as $\propto \tau^{-\beta}$ with an exponent $0.4<\beta<0.5$. 

 \begin{figure}[h!]
 \centering
  	\includegraphics[width=0.23\textwidth]{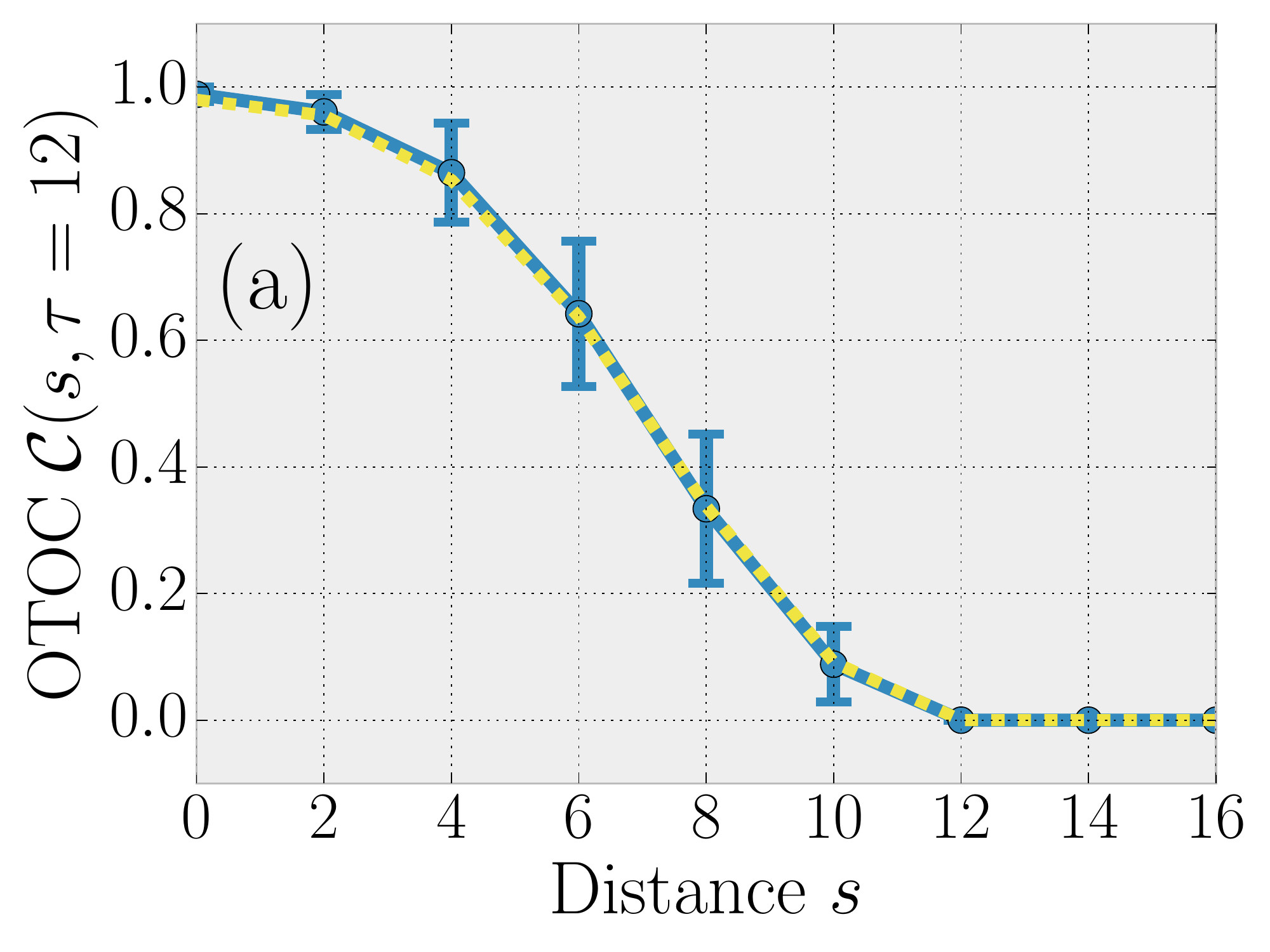}
  	\includegraphics[width=0.23\textwidth]{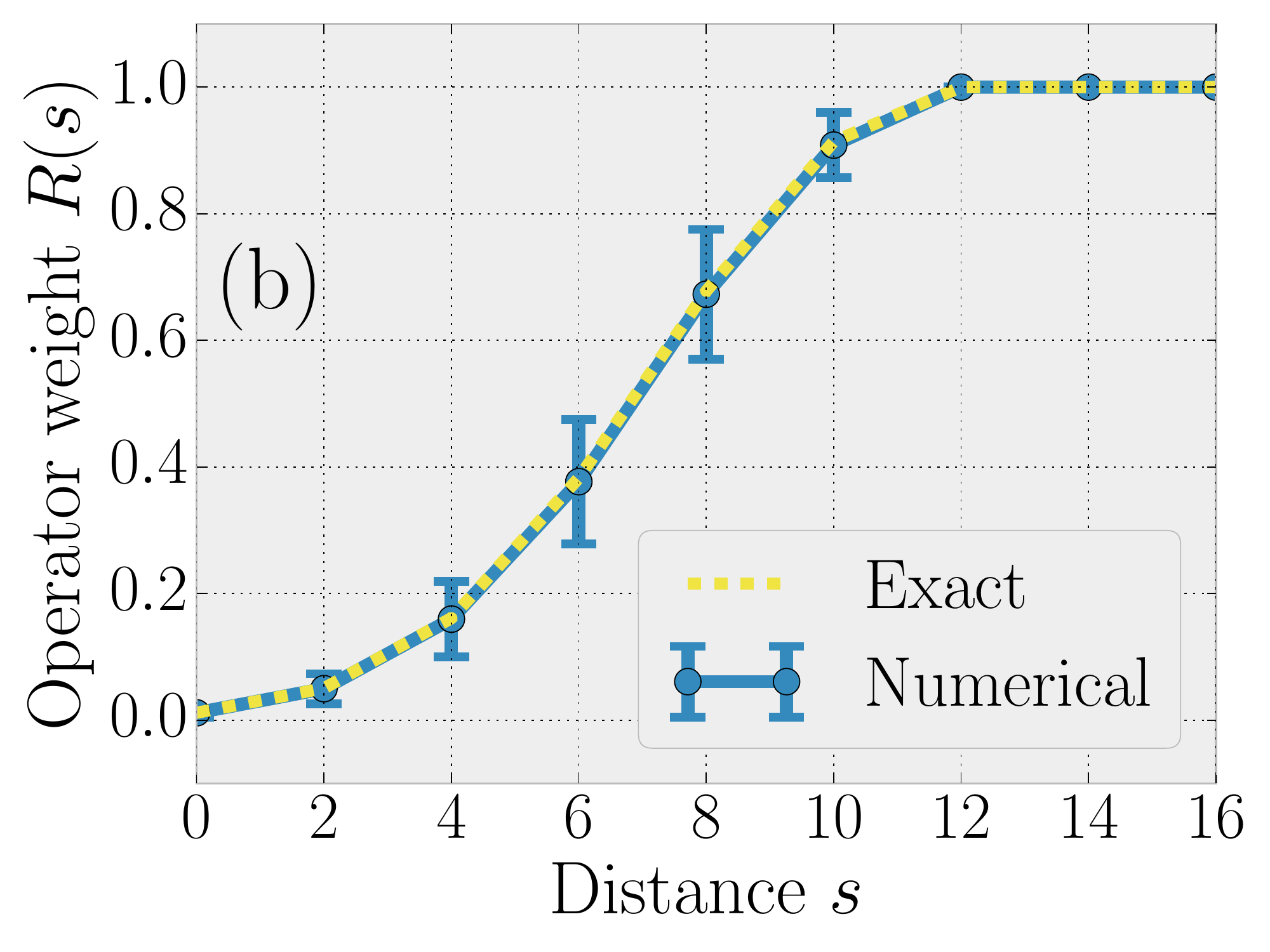}
  	\includegraphics[width=0.35\textwidth]{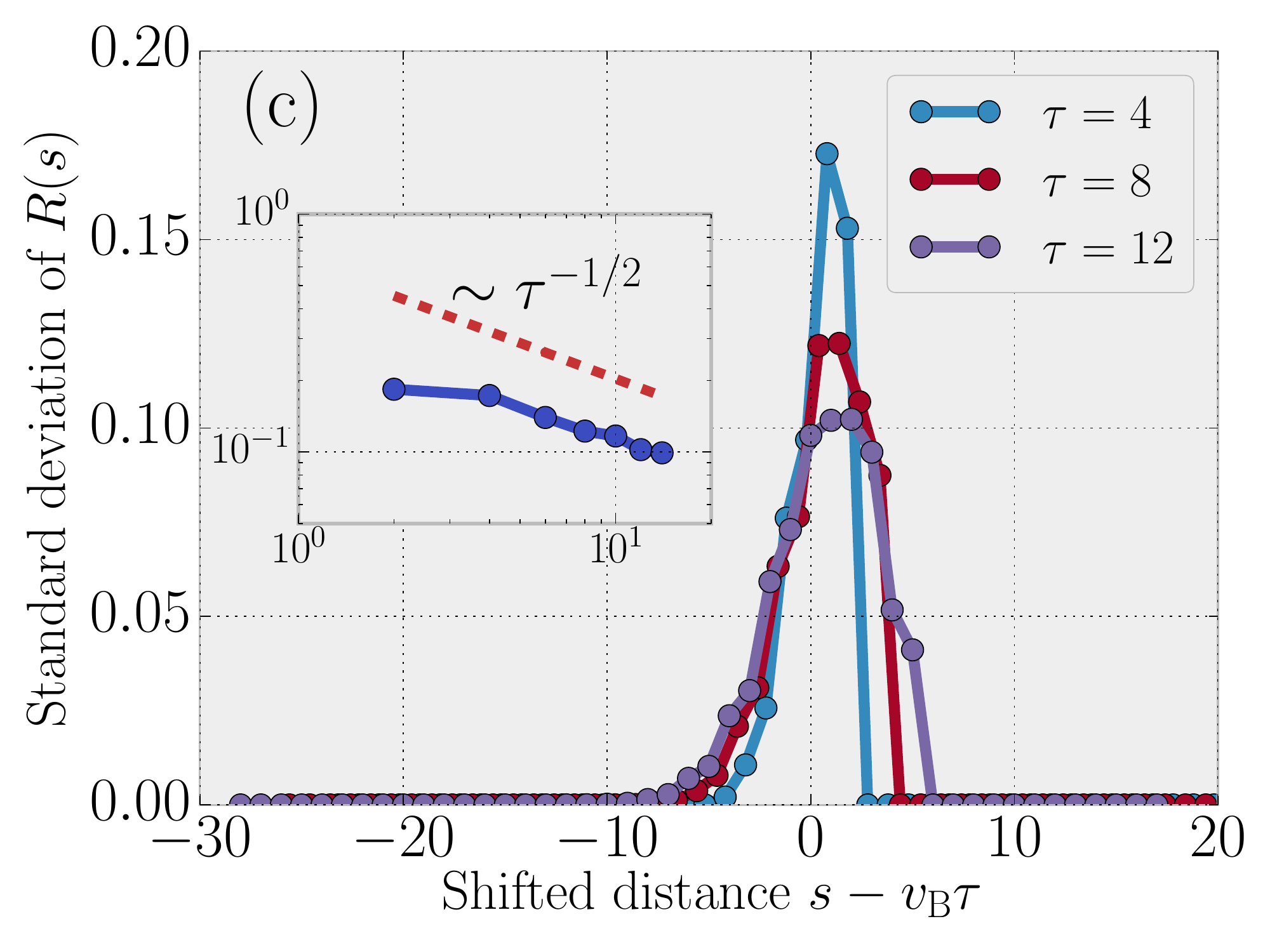} 
\caption{Average values and fluctuations of the (a) OTOC and (b) the total weight left of site $s$ for the time evolved operator $Z_0(\tau)$ after $\tau=12$ layers of the random circuit. Blue dots correspond to average values of 100 different random circuits while the error bars signify one standard deviation. Figure (c) shows the standard deviations of $R(s)$ for different times. The largest fluctuations decrease in time approximately as $\propto \tau^{-1/2}$, as shown by the inset.}
 \label{fig:fluctuations}
 \end{figure}

\subsection{Relationship to entanglement spreading}\label{ss:randomcircuit_entanglement}

Another question closely related to operator hydrodynamics is the problem of entanglement growth. An initial product state develops spatial entanglement during time evolution. In systems without quenched disorder the entanglement is expected to grow linearly in time, with a growth rate characterized by the `entanglement velocity' $v_{\text{E}}$. In this section we use our results for operator spreading to compute the typical value of the second R\'enyi entropy between two sides of a spatial entanglement cut and extract the entanglement velocity from it. We find that this velocity is smaller than the butterfly velocity for any finite $q$ and approaches the light cone velocity logarithmically slowly, so that $v_\text{LC} - v_{\text{E}} \propto 1 / \log{q}$ for large $q$. At long times the R\'enyi entropy saturates to its maximal value with the saturation becoming increasingly sharp as $q$ is increased.

Consider an initial ferromagnetic product state of the 1D chain where the state on site $s$ is an eigenstate of the local Pauli operator $Z_s$ with eigenvalue $+1$ (Note that for the average behavior of the random circuit the choice of initial product state is unimportant). The density matrix $\hat\omega$ corresponding to this state is then a sum over all possible \emph{$Z$-strings}, i.e. Pauli strings that only contain powers of the operator $Z$ on each site:
\begin{equation}\label{eq:Z_prod_state}
\hat\omega = \frac{1}{q^L}\prod_{s=1}^{L} \left(\sum_{k=0}^{q-1}Z_s^k\right) = \frac{1}{q^L} \sum_{\vec{\nu}\in\text{Z-strings}}\sigma^{\vec{\nu}}.
\end{equation}
The density matrix at time $t$ is obtained by replacing each Pauli string $\sigma^\nu$ in Eq.~\eqref{eq:Z_prod_state} with its time evolved counterpart $\sigma^{\vec{\nu}}(t)$.

Let us now divide the system into two regions, $A$ and $B$, the first of which corresponds to sites $1,\ldots,L_\text{A}$. Generalizing the formula of Refs.~\onlinecite{Abanin17,Mezei16}, the second R\'enyi entropy $S^{(2)} = -\log\text{tr}(\hat\omega_\text{A}^2)$ of the reduced density matrix $\hat\omega_\text{A} = \text{tr}_\text{B}(\hat\omega)$ is related to the operator spreading coefficients by
\begin{align}\label{eq:S_Renyi_from_c}
e^{-S^{(2)}} = \frac{1}{q^{L_\text{A}}} \sum_{\vec{\nu},\vec{\nu'}}\sum_{\vec{\mu}\subset A} c_{\vec{\mu}}^{\vec{\nu}} {c_{\vec{\mu}}^{\vec{\nu'}}}^*
\approx \frac{1}{q^{L_\text{A}}} \sum_{\vec{\nu}}\sum_{\vec{\mu}\subset A} |c_{\vec{\mu}}^{\vec{\nu}}|^2
\end{align}
where the strings $\vec{\nu}$ and $\vec{\nu'}$ are both $Z$-strings and $\mu$ has support entirely in subsystem $A$. In the last equality of Eq.~\eqref{eq:S_Renyi_from_c} we assumed that the off-diagonal contributions are negligible, which becomes \textit{exactly} true in the random circuit model once we average over different realizations.

Let us assume that $L_\text{A}$ is even. Reverting back to the coarse-grained position $x$ (see Eq.~\eqref{eq:newrho}) we recognize Eq.~\eqref{eq:S_Renyi_from_c} as the total operator weight in region $A$, $R(x=L_{\text{A}}/2,t)$, as defined in Eq.~\eqref{eq:Rdefn}, summed over all initial $Z$-strings. As we noted previously, this quantity is on average the same for all initial strings with the same endpoints $x_0$. The number of different $Z$-strings with right endpoint $x_0$ is $q^{2(x_0-1)} (q^2-1)$. After averaging over random circuits, and assuming an even number of layers, Eq.~\eqref{eq:S_Renyi_from_c} thus becomes
\begin{equation}~\label{eq:S_Renyi_from_Rho}
\overline{e^{-S^{(2)}(\tau)}} = \frac{1}{q^{L_\text{A}}} + \frac{q^2-1}{q^2}\sum_{x_0 = 1}^{L/2} \frac{\overline{R}(L_\text{A}/2-x_0,t=\tau/2)}{q^{L_\text{A}-2x_0}},
\end{equation} 
where we have used that $\overline{\rho_R}(x)$ (and consequently $\overline{R}(x)$) only depends on the position $x$ relative to the initial endpoint $x_0$. The first term in Eq.~\eqref{eq:S_Renyi_from_Rho} is the contribution coming from the identity operator, which is responsible for the saturation of the entanglement at long times.

Using the exact solution Eq.~\eqref{eq:rho_solution} one can perform the sum over initial positions to find
\begin{equation}\label{eq:S_Renyi_exact}
\overline{e^{-S^{(2)}(\tau)}} = q^{-L_\text{A}} + [1 - q^{-L_\text{A}}] \frac{q^\tau}{\left(1+q^2\right)^\tau} \sum_{x=-\tau/2}^{L_\text{A}/2 -1} {\tau \choose \frac{\tau}{2} + x}.
\end{equation}
The sum over binomial coefficients can be expressed in terms of a hypergeometric function.

Eq.~\eqref{eq:S_Renyi_exact} describes an entanglement that initially increases linearly with time and saturates to the maximum value $L_\text{A}\log{q}$ at long times. For $\tau \ll L_\text{A}$ we find
\begin{equation}
\overline{e^{-S^{(2)}(\tau)}} \approx \left( \frac{2q}{1+q^2} \right)^\tau,
\end{equation} 
from which we can identify the \emph{entanglement velocity}
\begin{align}\label{eq:entanglement_velocity}
v_\text{E} \equiv \frac{1}{\log{q}} \frac{\text{d}S(\tau)}{\text{d}\tau} &=\frac{\log \frac{q+q^{-1}}{2}}{\log q} = \frac{\log \left(1-v_{\text{B}}^2\right)}{\log \left(\frac{1-v_{\text{B}}}{1+v_{\text{B}}}\right)}.
\end{align}
Note that the entanglement velocity approaches $1$ logarithmically slowly for large $q$, i.e., $v_E\sim 1- \log(2)/\log(q)$. This is a separate velocity scale, distinct from, and smaller than the front speed $v_\text{E} <v_{\text{B}}$. This difference comes from the diffusive broadening of the operator wavefront.  First, it is straightforward to verify that if the wavefront is sharp, i.e., $\overline{R}(x,t) = \Theta(x-v_{\text{B}}t)$ then Eq.~\eqref{eq:S_Renyi_from_Rho}  gives $v_\text{E} = v_{\text{B}}$. Second, we have checked that \eqnref{eq:S_Renyi_from_Rho} gives $v_\text{E} = v_{\text{B}}$ even if the wavefront has a width which is finite but independent of time\footnote{We checked this for front profiles arising from $\rho_R(x,t) \propto e^{-(x-v t)^2/\sigma^2}$ and $\rho_R(x,t) \propto e^{-|x-v t|/\sigma}$.}. Hence, we attribute the difference between $v_{\text{B}}$ and $v_\text{E}$ to the fact that the operator front broadens in time. 

In the right panel of \figref{fig:randomcircuit_entanglement} we compare the exact formula Eq.~\eqref{eq:S_Renyi_exact} to the second R\'enyi entropy as computed numerically (using a matrix product state representation), averaging over $100$ realizations of the circuit and find extremely good agreement. Moreover, the numerical calculation allows us to compare the typical and average values of the R\'enyi entropy, defined as $S^{(2)}_\text{typ} = -\log{\overline{e^{-S^{(2)}}}}$ and $S^{(2)}_\text{avg} = \overline{S^{(2)}}$, respectively. We find no significant difference between the two values, showing that there are no strong circuit-to-circuit fluctuations in the entropy and both are captured well by our exact formula. We also found numerically that replacing the R\'enyi entropy with the von Neumann entropy leads to a slightly larger entanglement velocity.

The entanglement saturates when the contribution of the identity becomes significant (i.e. when all other operators have essentially left the subsystem). Note that the saturation softens, compared to the prediction of the simple operator spreading model of Ref. \onlinecite{Abanin17}, which is another consequence of the diffusive broadening of the front. This intermediate saturation regime becomes smaller with increasing $q$, as shown in the left panel of~\figref{fig:randomcircuit_entanglement}.

 \begin{figure}[h!]
 \centering
  	\includegraphics[width=0.23\textwidth]{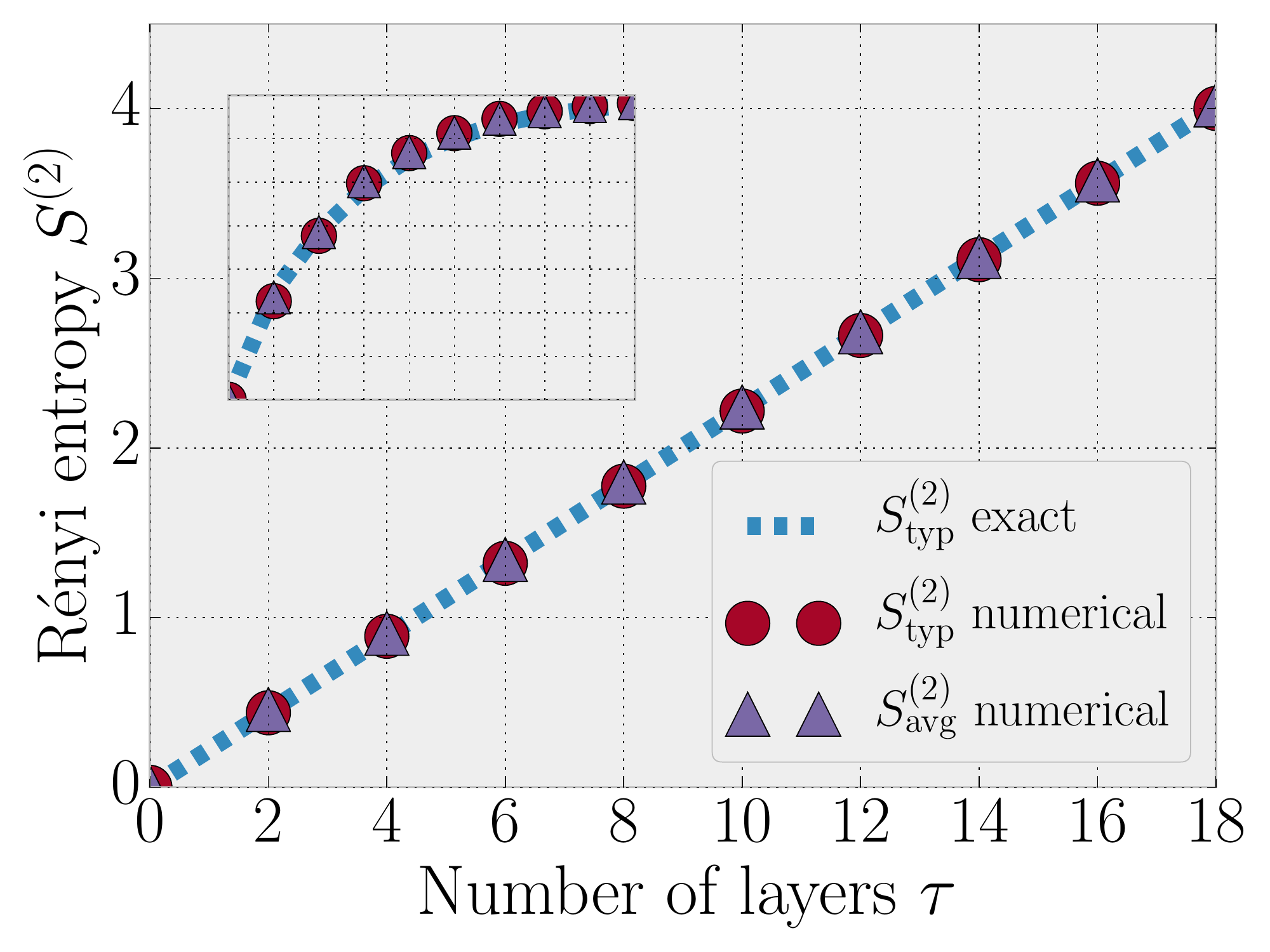}
  	\includegraphics[width=0.23\textwidth]{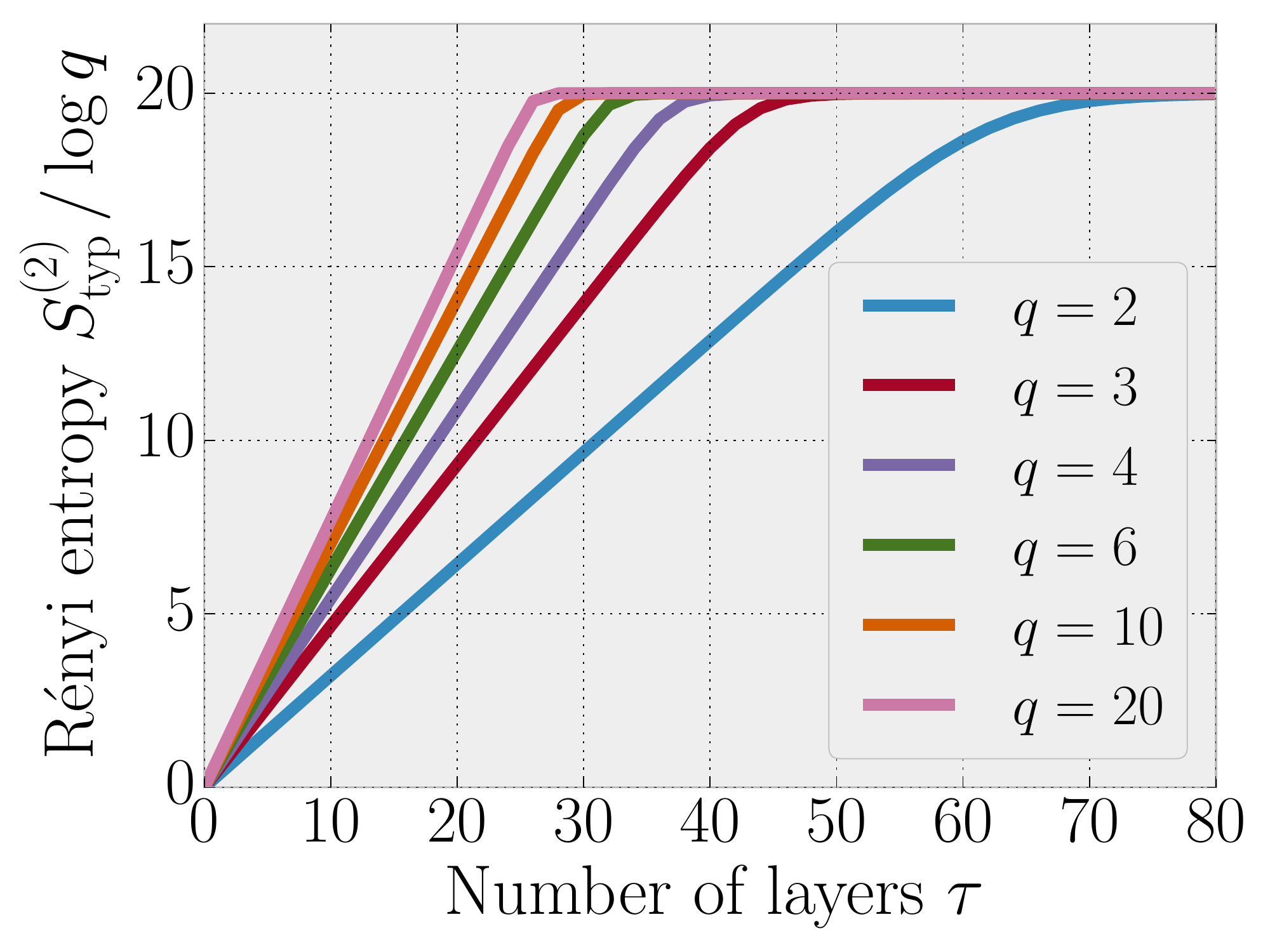} 

 \caption{Entanglement growth in the random circuit model. Left: comparing the exact formula, Eq.~\eqref{eq:S_Renyi_exact}, to matrix product state numerics shows that it captures both the typical value $S^{(2)}_\text{typ} = -\log{\overline{e^{-S^{(2)}}}}$, and the average $S^{(2)}_\text{avg} = \overline{S^{(2)}}$ of the second R\'enyi entropy. The main figure shows the time dependence for $L_\text{A} = 50$ sites, while the inset is for $L_\text{A} = 2$. Right: The entanglement velocity increases with $q$ according to Eq.~\eqref{eq:entanglement_velocity} while the saturation regime becomes smaller.}
  \label{fig:randomcircuit_entanglement}
 \end{figure}

\section{Comparison with the kicked Ising model}\label{s:kickedIsing}

A natural question that emerges in relation to the results stated above, is to what extent are they representative of other, more generic thermalizing quantum many-body systems. To address this question we investigate a system with a periodically driven nearest neighbor Hamiltonian. Our model has the same geometry as the random circuit, shown in~\figref{fig:circuit} and it similarly does not conserve energy. However, unlike random circuits, it is periodic in time and its two-site (and one-site) gates take a specific form, rather than randomly chosen. Despite these two significant differences, we find that several details of the operator spreading described in \secref{s:randomcircuit}, such as the diffusive broadening of the wavefront, remain approximately valid.

For concreteness, we consider a model with on-site Hilbert space dimension $q=2$ that consists of switching back and forth between two Hamiltonians, such that one period of the time evolution (with period time $T$) is given by
\begin{equation}\label{eq:Ising_def}
\hat U = e^{-i \frac{T}{2} h \sum_s X_s} e^{-i\frac{T}{2}\sum_s[Z_sZ_{s+1} + gZ_s]}.
\end{equation}
This system \textemdash which we refer to as the `kicked Ising model' \textemdash is known to be ergodic, provided that both $g$ and $h$ are sufficiently large. Since at any given time the terms in the Hamiltonian all commute with each other, the time evolution can be represented as a circuit of two-site unitaries (with the one-site rotations included in the two-site gates) with the same geometry as in Fig.~\ref{fig:circuit}. As such, it is in fact contained among the ensemble of random circuits considered before. The question is to what extent do the properties of this specific circuit coincide with the average quantities discussed above.

At first, operator spreading in the Floquet system seems very different from the case of the random circuit. There is no inherent randomness and the evolution is completely determined by the internal structure of the initial operator $\sigma^\mu$, while for the random circuit the average behavior was independent of the internal structure. The random unitary and Floquet systems do have something in common, however: both involve evolution under a \textit{local unitary} circuit. This puts strong constraints on the evolution of $\rho_R$: not only does it obey global conservation law $\int \text{d}x \rho_{R}(x,t) = const.$ (we revert to a continuum notation for ease of presentation), it should also obey a local conservation law 
\begin{equation}
\partial_t \rho_{R}(x,t) +\partial_x J(x,t) = 0
\end{equation}
for some local current density $J(x,t)$. This conservation law puts severe restrictions on the equation of motion of $\rho_R$. For example, one can imagine that in a coarse grained picture, on long enough time scales, the constitutive equation $J\sim v \rho_R + D \partial_x \rho_R +\ldots$ becomes valid; note that the discretized version of this constitutive relation is exactly the random walk equation we derived for the random circuit averaged $\rho_R$, see \eqnref{eq:random_walk}.  Therefore it seems plausible that in a sufficiently coarse grained picture, the dynamics might be well approximated by a biased diffusion similar to the one described in \secref{s:randomcircuit} for the Floquet circuit, with hopping probabilities depending on the microscopic couplings. Here we present numerical evidence in support of this conjecture. Our results can be summarized in three points:
\begin{itemize}
\item The butterfly velocity $v_{\text{B}}$ depends strongly on the coupling $g$ and can be tuned to be much smaller than the light cone velocity $v_\text{LC}$
\item When tuning the couplings to decrease $v_{\text{B}}$ from its maximal value $v_{\text{B}}\approx v_\text{LC}$, the front also becomes wider, as expected for a random walk when increasing the probability of hopping to the left at the expense of the probability of hopping right.
\item The operator wavefront gets wider during time evolution, with the width increasing in time as $\sim t^\alpha$, with an exponent $0.5\lesssim\alpha\lesssim 0.6$
\end{itemize}

We find numerically a linearly propagating wavefront for the time evolved operator $Z_0(t)$, which shows up in both the OTOC and the weight $R(s)$, with the OTOC $\mathcal{C}(s,t)$ saturating to 1 behind the front. While for the random circuit the speed of the front was set by the on-site Hilbert space dimension $q$, for the kicked Ising model we find that this speed can be tuned continuously by changing the value of the transverse field $g$~\footnote{we found that the butterfly velocity does not depend significantly on the value of $h$, provided that it is not too small}, as shown in Fig.~\ref{fig:Ising_tunable_vB}. Note that changing $g$ does not affect the light cone velocity, which is $\Delta s / \Delta T = 1$ due to the geometry of the circuit that represents the Floquet time evolution. For $g=0$, an initial operator $Z_0$ remains localized on the same site. As we make $g$ larger, the butterfly velocity gradually increases and it reaches $v_{\text{B}}\approx v_\text{LC}$ for $g\approx 0.9$ with period time $T=1.6$~\footnote{we note that this set of parameters, for which $v_{\text{B}}$ is near maximal, is close to the parameter choice found in Ref.~\onlinecite{Huse16} to exhibit very fast thermalization}. Looking at~\figref{fig:Ising_tunable_vB} we notice that decreasing $v_{\text{B}}$ from its maximum corresponds to an increased front width at any given time. This is consistent with a coarse-grained random walk description, wherein increasing the probability of hopping to the right results in both a larger butterfly velocity and a suppression of the diffusion constant.

 \begin{figure}[h!]
 \centering
  	\includegraphics[width=0.23\textwidth]{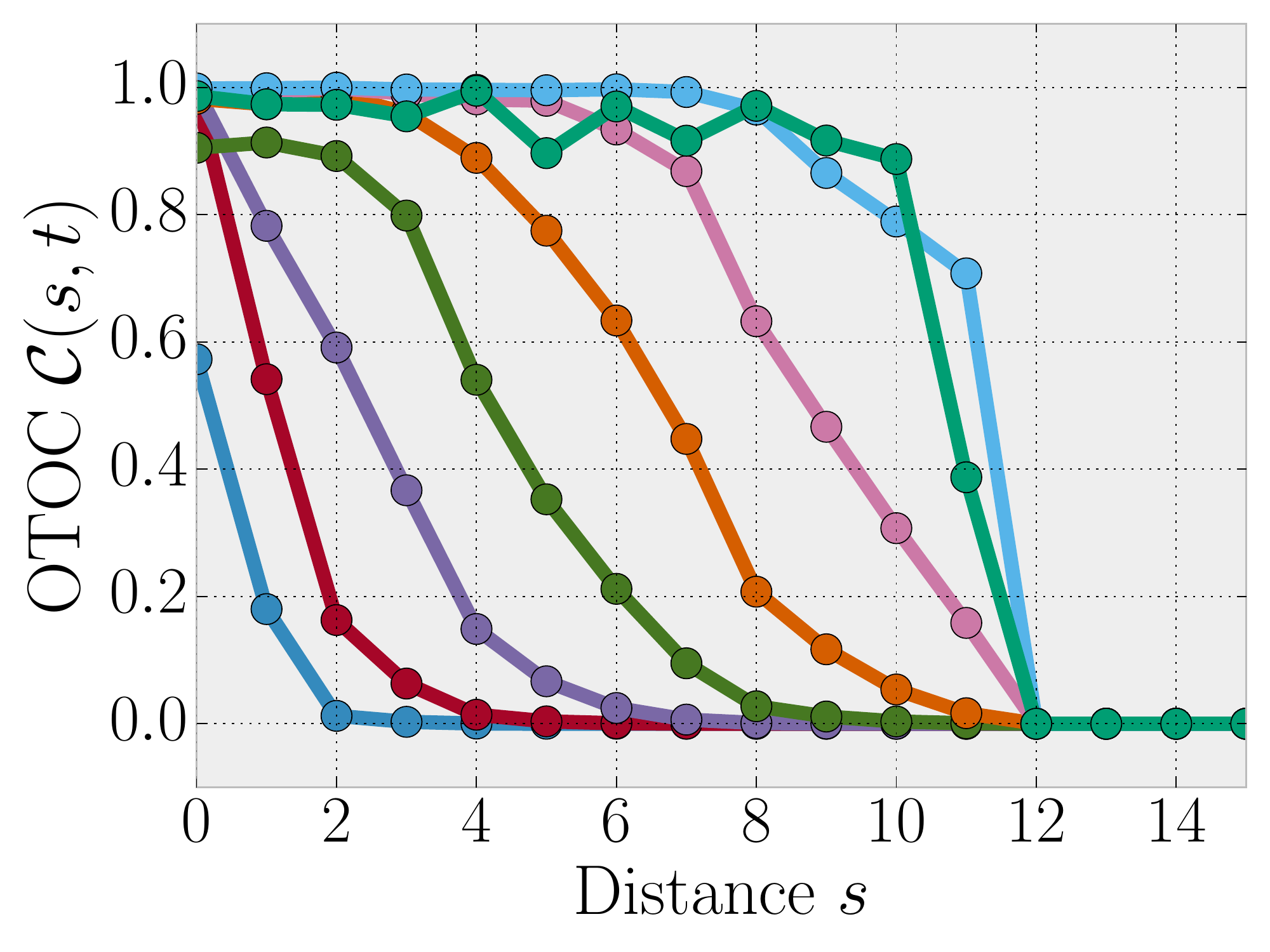}
  	\includegraphics[width=0.23\textwidth]{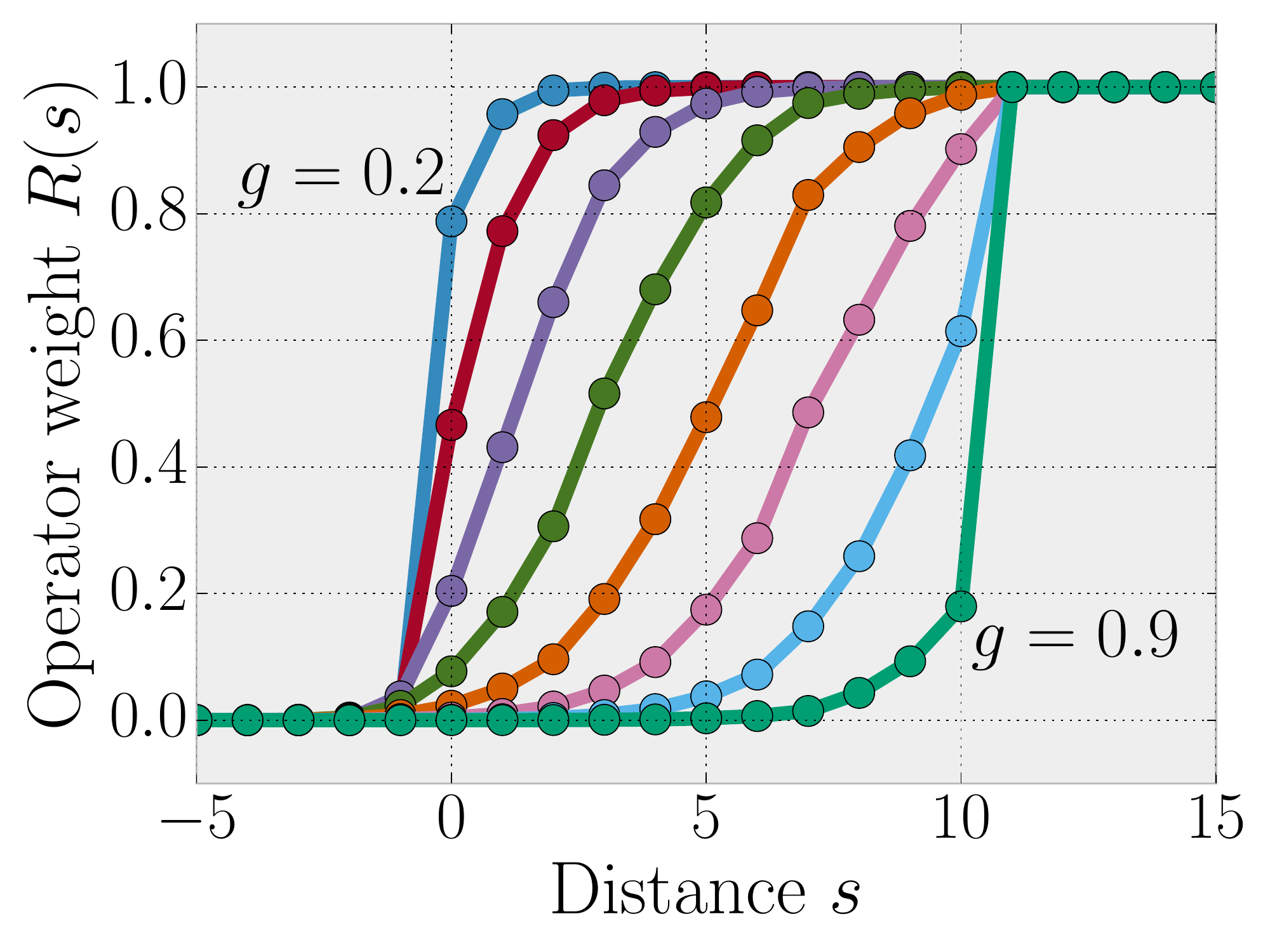} 
\caption{OTOC (left) and operator weight $R(s)$ (right) for different distances $s$ after $t/T = 12$ driving cycles of the kicked Ising model with the strength of the transverse field $g=0.2, 0.3, \ldots, 0.9$ as indicated in the right figure. The longitudinal field is fixed at $h=0.809$ while the period time is $T = 1.6$. The butterfly velocity shows a strong dependence on the coupling $g$, with the front width increasing as one moves away from the limit of maximal velocity.}
 \label{fig:Ising_tunable_vB}
 \end{figure}

The most important evidence in support of a hydrodynamic description comes from examining the front width as a function of time. Similarly to the random circuit model, we find the wavefront of the operator spreading broadens as we go to longer times. To quantify the width by looking at the standard deviation of $\rho_R(s) = R(s) - R(s-1)$, i.e.
\begin{equation}\label{eq:frontwidth}
\sigma(t) \equiv \sqrt{\sum_s \rho_R(s) s^2 - \left[\sum_s \rho_R(s) s\right]^2}
\end{equation}
As shown in Fig.~\ref{fig:KimHuse_front_widening}, at long times the width grows algebraically in time as $\sigma(t) \propto t^\alpha$ with an exponent $0.5\lesssim\alpha\lesssim 0.6$. This is roughly consistent with the random walk description of operator spreading put forward in \secref{s:randomcircuit}. This diffusive broadening is expected to result in the strict inequality $v_\text{E} < v_\text{B}$ for the entanglement velocity, according to the arguments put forward in Sec.~\ref{ss:randomcircuit_entanglement}. We confirmed numerically that this indeed holds in this model for various values of $g$.

 \begin{figure}[h!]
 \centering
  	\includegraphics[width=0.23\textwidth]{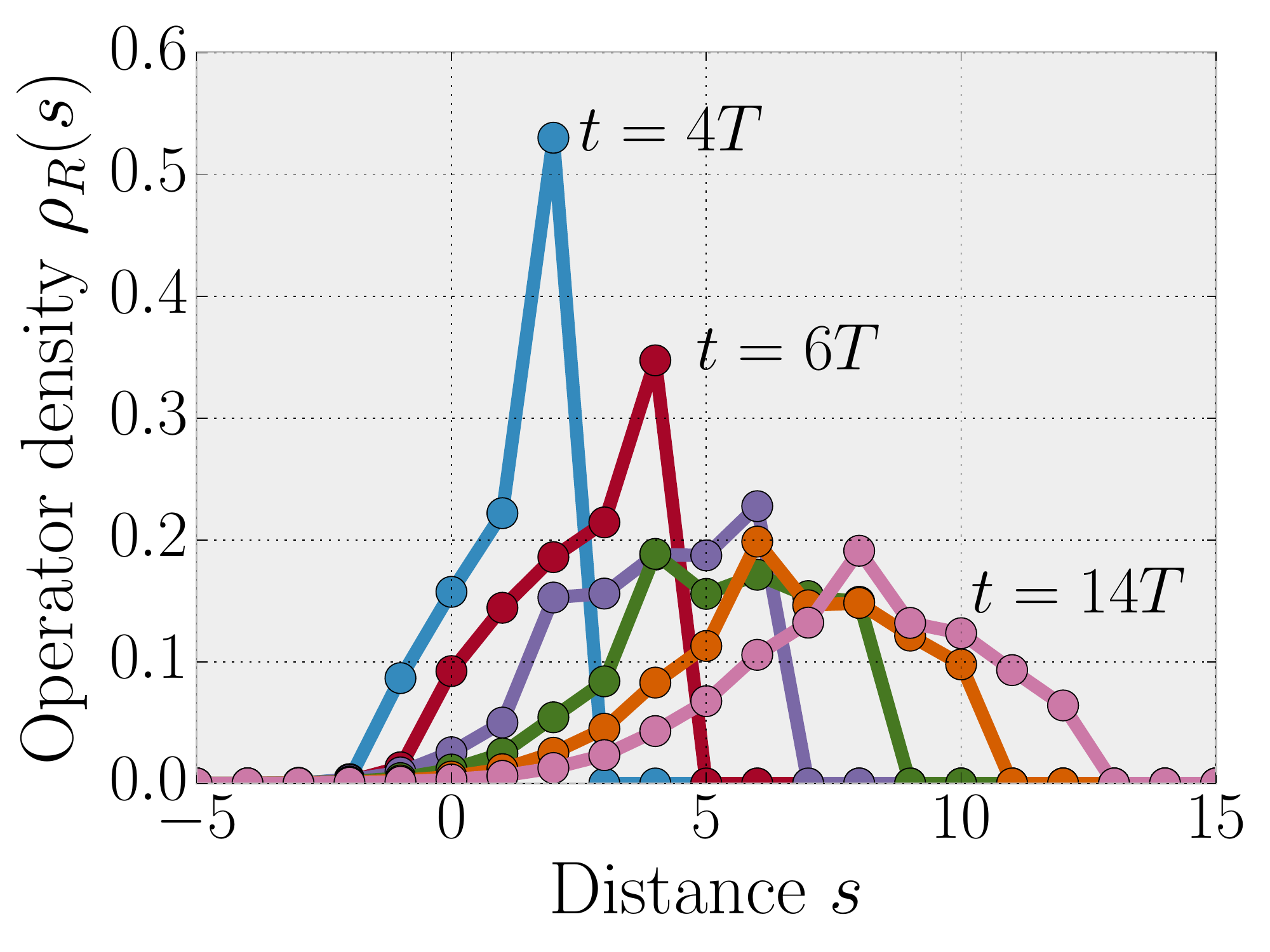}
  	\includegraphics[width=0.23\textwidth]{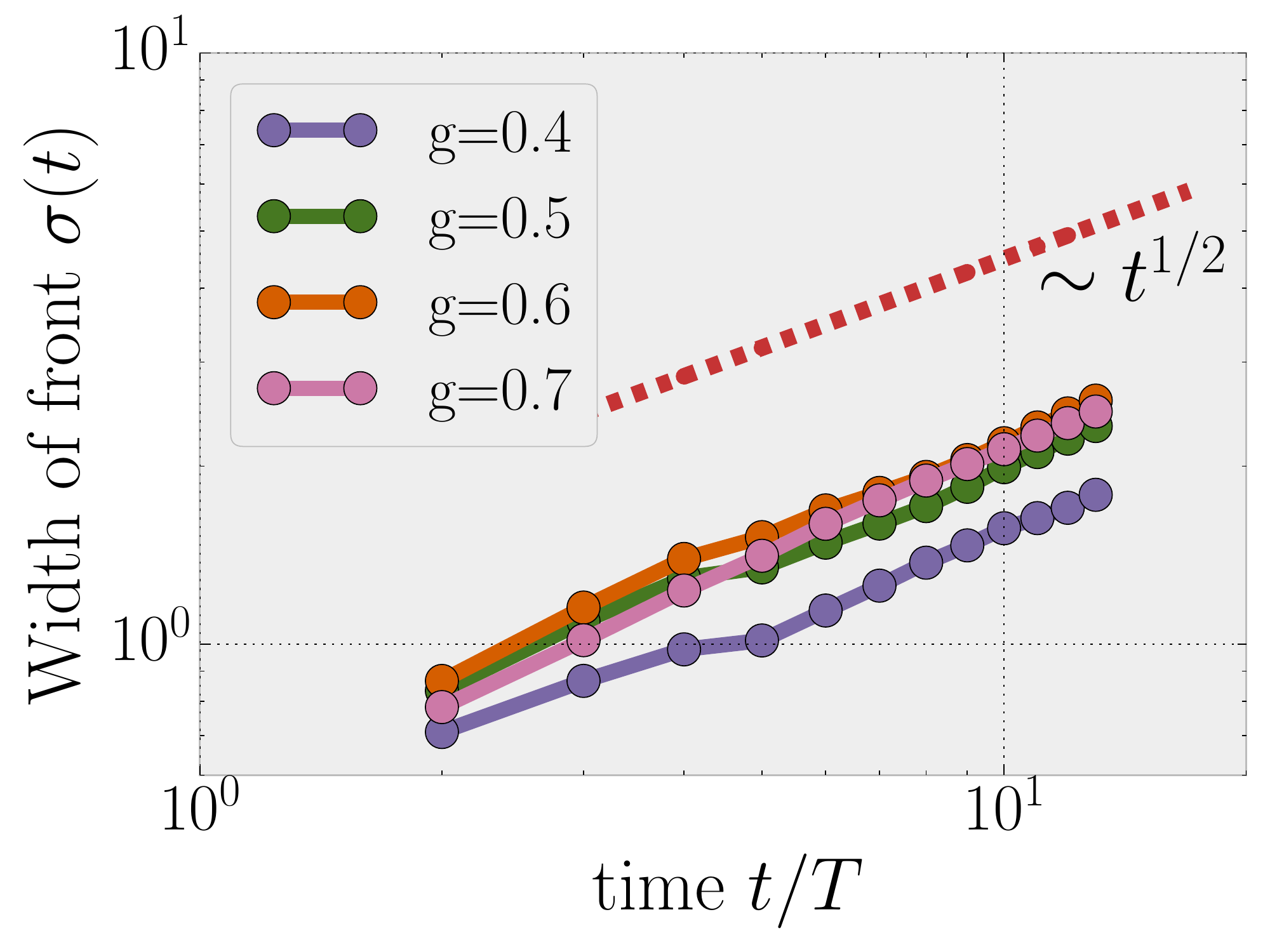} 
\caption{Broadening of the operator wavefront with time in the kicked Ising model for parameters $h=0.809$ and $T = 1.6$. Left: the weight $\rho(s)$ for transverse field $g=0.7$ at times $t/T = 4, 6, \ldots, 14$. Right: the width of the front as defined by Eq.~\eqref{eq:frontwidth} as a function of time for different $g$, showing a roughly diffusive spreading.}
 \label{fig:KimHuse_front_widening}
 \end{figure}

Finally, one might wonder whether the above story continues to hold when we consider a system with energy conservation i.e., a time independent Hamiltonian. We have been informed\cite{Cheryne} that there is evidence of a diffusively growing front in a family of energy conserving ergodic spin-chains. 

 \section{Fractal Clifford circuits}\label{s:Clifford}
In this section we compare the results obtained for the random circuit model of Sec.~\ref{s:randomcircuit} to another set of circuits which do not in general exhibit energy conservation, namely  Clifford circuits. We show that despite the fact that Clifford circuits can be `ergodic' in certain senses -- they can exhibit linear entanglement growth and correlations heating up to infinite temperature -- both their spectrum and their OTOC behavior is anomalous and non-chaotic. 

General Clifford circuits have a particularly simple structure to their operator spreading coefficients: Under time evolution by $t$ steps, a simple Pauli string becomes another Pauli string
\begin{equation}\label{eq:Clifford}
|c_\nu^\mu(t)| = \delta_{\mu,M_t(\mu)}
\end{equation}
where $M_t$ is a linear endomorphism acting on the set of strings $\left(\mathbb{Z}_{q}^{\otimes2}\right)^{\otimes L}$. Thus, a Pauli operator evolves to a single Pauli operator, rather than the superposition of Pauli operators allowed by \eqnref{eq:cmunudef}. $M_t$ has to obey a number of constraints. In particular, time evolution should preserve the commutation relations amongst the Pauli strings\cite{Schlingemann08,Gutschow09,Gutschow10}. (Incidentally, using these constraints, it is possible to classify all  \textit{translation invariant} Clifford circuits into three types called fractal, glider, and periodic\cite{Gutschow09,Gutschow10}.) 
 
In line with the stringency of the constraint \eqnref{eq:Clifford}, it is unsurprising that Clifford circuits have pathological properties distinguishing them from more generic ergodic systems.  Calculating  the OTOC \eqnref{eq:OTOC_def} for initial Pauli strings $\sigma^\mu,\sigma^\nu$ for a Clifford circuit gives
\be
 \frac{1}{2}\langle\psi\mid|[\sigma^{\mu}(t),\sigma^{\nu}]|^{2} \mid \psi \rangle =1 - \cos \theta_{M_t(\mu),\nu}
\ee
where $\theta$ is the phase obtained commuting $\sigma^{M_t(\mu)}$ through $\sigma^{\nu}$, and the final result is independent of the state $\psi$. For generic clifford circuits, this result shows persistent oscillations at late times, and does not settle to a specific limit. This is in contrast to the OTOC decay seen in \secref{ss:randomcircuitOTOC} and the OTOC decay expected in more generic thermalizing systems (see \secref{s:kickedIsing}). 

Additionally, Clifford circuits tend to have pathological spectral properties not associated with ergodic systems. For instance, one can prove that translation invariant Clifford circuits have exact recurrences $U^{t_\text{rec}}_\text{f} \propto1$ on time scales linear in system size $t_\text{rec} = O(L)$ (see \appref{app:linearrecurrence}). This directly implies that the eigenvalues of $U^{t_\text{rec}}_\text{f} $  are $t_{\text{rec}}$-th roots of unity, which in particular implies that the average level degeneracy is $O(2^L/L)$. This spectral structure does not exhibit the level repulsion we expect in systems with ETH and no apparent global symmetries.

Although operators obey the  stringent condition \eqnref{eq:Clifford}, clifford circuits can still exhibit many  ergodic properties usually associated with `ergodicity'. Indeed, it can be proved rigorously that  the above mentioned `fractal' clifford circuits exhibit: (a) Linear entanglement growth starting from certain so-called stabilizer initial states\footnote{Theorem IV.8 in Ref.~\onlinecite{Gutschow10}.} and relatedly (b) starting from an initial product state, at long times all local observables tend towards their infinite temperature expectation values\footnote{Prop III.2 in Ref.~\onlinecite{Gutschow10}.}.  These fractal clifford circuits are periodic in time, in addition to being spatially translation invariant. An explicit example of such a circuit has $q=2$ and takes form $U(t)=U_{\text{f}}^t$ where
\be\label{eq:fractalClifford}
U_\text{f} = e^{\frac{i\pi}{3\sqrt{3}}\sum_s (-X_s+Y_s-Z_s)} e^{\frac{i\pi}{4}\sum_{s} Z_s Z_{s+1}} \text{.}
\ee
Note that the resulting circuit has the geometry illustrated in \figref{fig:circuit} (with the one-site rotations merged into the 2-site gates), and the circuit elements repeat every $2$ layers. Operators evolve under this circuit in a particularly simple way:
\begin{equation}\label{eq:Clifford_op_spread}
Z_s\to Y_s \quad Y_s\to Y_{s-1}X_{s}Y_{s+1}\quad X_{s}\to Y_{s-1}Z_{s}Y_{s+1}.
\end{equation} 
Note that for certain strings there is a possibility of cancelation, e.g. $Y_s Z_{s+1} \to Y_{s-1}X_{s}1\!\!1_{s+1}$.

\figref{fig:Clifford_entanglement} shows that for this particular circuit even more generic initial states show near linear entanglement growth, with a rate that is mostly independent of the initial state at long times. We can explain this using the operator spreading picture of entanglement growth, discussed in~\secref{ss:randomcircuit_entanglement}. Looking at~\eqref{eq:Clifford_op_spread} we notice that a string with a Pauli operator $X$ or $Y$ at its right endpoint will keep moving to the right at a fixed speed of 1 site / 1 period forever, leading to the fixed rate linear entanglement growth seen for stabilizer states. The deviations from this behavior for a random product initial state come from strings that, up to time $t$, failed to start growing to the right. However, such operators need to have a very specific structure and their number is exponentially small in $t$ (going roughly as $4^{-t}$). Consequently, after the first few periods, most product states settle to the same entanglement velocity exhibited by stabilizer states. This is in contrast with the behavior of the \emph{operator entanglement} of any Pauli string $\sigma^\mu(t)$ which remains 0 at all times due to Eq.~\eqref{eq:Clifford}.

 \begin{figure}[h!]
 \centering
  	\includegraphics[width=0.3\textwidth]{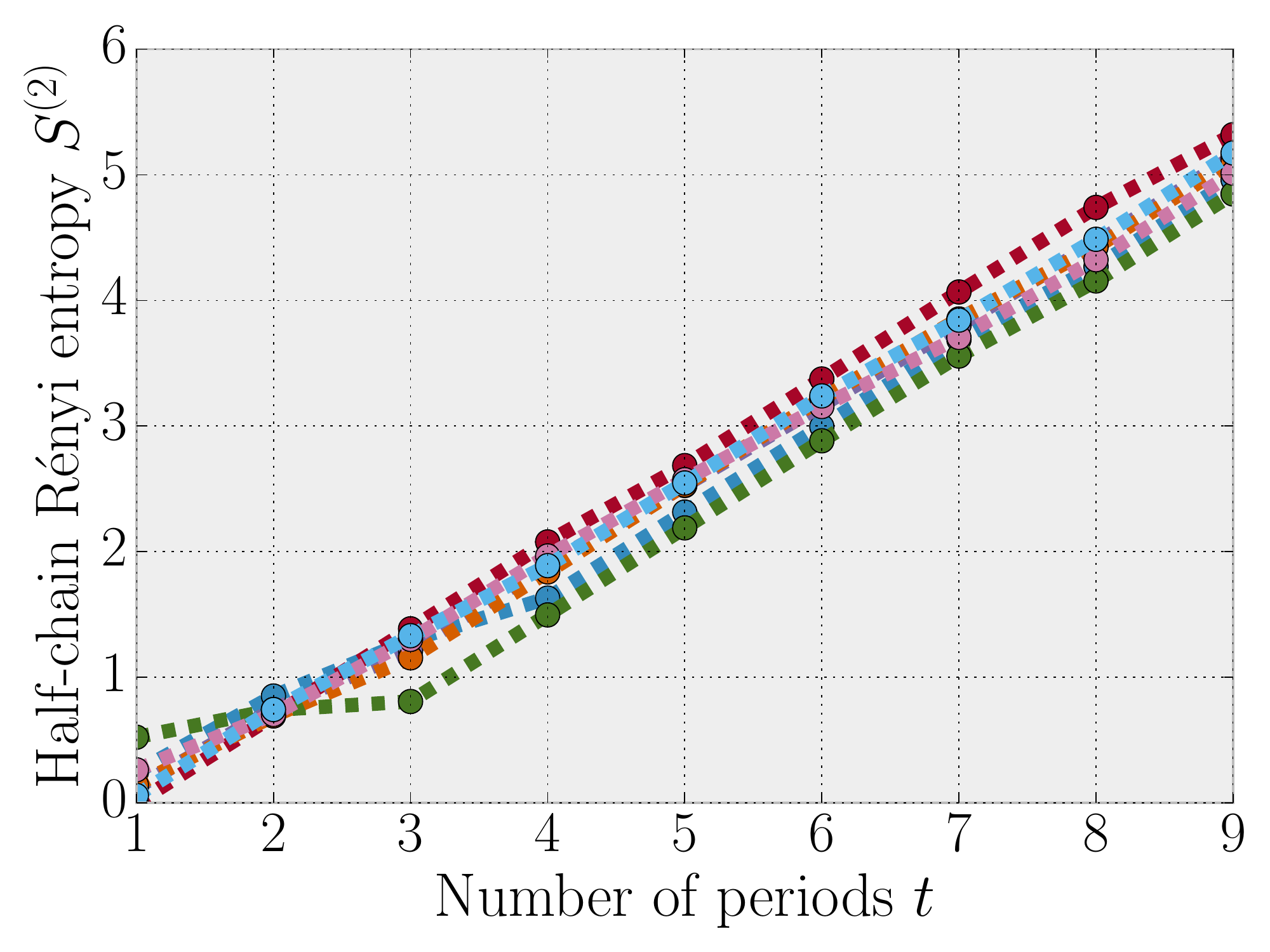} 
\caption{Time evolution of the second R\'enyi entropy between two halves of a 40 site chain in the periodic Clifford circuit defined in Eq.~\eqref{eq:fractalClifford}. The plot shows the growth of entanglement for 7 different initial random product states. After the first few steps the growth rate of entanglement becomes roughly the same for all initial states, set by the fact that the majority of operators travels with a fixed velocity of 1 site / 1 period.}
 \label{fig:Clifford_entanglement}
 \end{figure}

In summary, Clifford circuits can be `ergodic' in the sense that they can exhibit linear entanglement growth and thermalization of local observables, but do not have the spectral or OTOC behavior expected of generic ergodic systems. This suggests that linear entanglement growth is a rather coarse measure of quantum information spreading, sensitive only to the fact that operators tend to grow in extent over time. The OTOC, on the other hand, is sensitive to both the fact that operators grow in extent, and the fact that they become complicated superpositions of many Pauli strings. Note also that the entanglement entropy tests the behavior of a large ensemble of different Pauli strings, while the OTOC characterizes the evolution of a single initial Pauli operator, thus giving more detailed information on the dynamics.
 
\section{Conclusions}\label{s:conclusion}
We considered the spreading of quantum information in one-dimensional systems with local unitary time evolution but no conservation laws. Our key results are as follows. For random circuit systems we derived an exact hydrodynamical description of operator spreading. According to this description, operators grow into superpositions of Pauli strings which tend to be supported at a front, propagating with velocity $v_{\text{B}}$ where $v_{\text{B}}$ is a velocity scale distinct from the light cone velocity (see \figref{fig:front_propagation}). An important consequence of the hydrodynamic equation is that the front itself undergoes a diffusive broadening in time. We proved that the velocity $v_{\text{B}}$ also determines the characteristic scale of change of the out-ot-time-order commutator (OTOC) between two local operators, while at very early times, long before the arrival of the front, we find a regime of exponential growth for the OTOC, with an exponent that depends on the initial separation of the two operators -- it remains to be determined whether this  early time exponential growth represents a quantum analogue of the Lyapunov behavior present in classical chaotic systems. The exact description of operator spreading in this model also allows us to give an exact formula for evolution of (typical) entanglement entropy across a cut, starting from an initial product state. We find that the entanglement grows with a third distinct velocity scale, the entanglement velocity $v_\text{E}$, which is strongly affected by the diffusive behavior, making $v_\text{E} < v_{\text{B}}$.

Comparing our exact results for the random circuit to an ergodic Floquet spin chain, we verified the presence of a diffusively broadening operator front in the kicked Ising model. This leads us to a tentative conjecture that this behavior is present for generic ergodic interacting quantum systems in 1D, evolving under \textit{local unitary} evolution, allowing for a universal coarse grained hydrodynamic description in these systems. We contrasted this with the fine tuned behavior observed in Clifford circuits, for which operators can spread ballistically but do not  become entangled superpositions of many operators. The ballistic spread of operators leads to the linear entanglement growth and the thermalization of local observables seen in certain Clifford circuits, but the fine tuned nature of the operator growth shows up in the pathological behavior of OTOCs. This demonstrates that linear entanglement growth and thermalization are not good predictors of the scrambling behavior captured by the OTOC. 

We note that within existing field theoretic calculations \cite{Stanford2016,Aleiner16,Swingle17} the broadening of the operator wavefront is not present, in contrast with our results. This discrepancy could be due to the unbounded local Hilbert spaces implicit in the field theoretic treatment; in any case, it should be resolved. Another possible direction for future work is finding the exact range of validity of the hydrodynamic operator spreading picture proposed here. It is known for example that the propagation of the OTOC becomes slower than ballistic in strongly disordered (but still thermalizing) systems\cite{Luitz17} and it would it is an interesting question whether this effect can be incorporated into some form of hydrodynamic description.

\paragraph*{Related work:} Shortly before completing this manuscript, we became aware of related work\footnote{A. Nahum, S. Vijay, and J. Haah, Operator spreading in Random Unitary circuits.}, which should appear in the same arxiv posting. We have also been informed of forthcoming related work\footnote{V.~Khemani et al, forthcoming.}.

\acknowledgements
This work was supported by the Princeton Center for Theoretical Science (CvK), Research Unit FOR 1807 through grants no. PO 1370/2-1 (TR and FP), and DOE Grant No DE-SC/0016244 (SLS). We acknowledge useful conversations with Mark Mezei, Douglas Stanford and David Huse. SLS would like to thank Lauren McGough and Roderich Moessner for early discussions about OTOCs.

\appendix
\section{Validity of \eqnref{eq:Rapprox} }\label{app:approx}
It suffices to work in coarse grained co-ordinates $t,x$. Recall that
\be
\overline{R}(x)=\sum_{y\leq x}\overline{\rho_{R}}(y).
\ee
\be
\overline{\rho_{R}}(x,t)=\frac{q^{2(t+x)}}{(1+q^{2})^{2t}}{2t \choose t+x}.
\ee
Our task is to justify the formula  \eqnref{eq:Rapprox}, which can be more carefully phrased as follows:
In the limit, $t\rightarrow\infty$, with $\ensuremath{|x-v_{\text{B}}t|/t=\varepsilon}$
held to be a fixed nonzero number, the integrated operator density
obeys
\begin{align}\label{eq:precisebounds}
\frac{\overline{R}(x)}{\overline{\rho_{R}}(x)} & =c_{0}\text{ for }-1<\kappa<v_{\text{B}} \\
\frac{1-\overline{R}(x)}{\overline{\rho_{R}}(x)} & =c_{1}\text{ for }v_{\text{B}}<\kappa<1
\end{align}
where $c_{0,1}$ are  positive numbers, bounded  in the $t\rightarrow\infty $ limit, and we defined $\kappa\equiv x/ t$ and work in units where $v_{\text{LC}} =1$. Define
\begin{align}
Q_{x} & \equiv\frac{\overline{\rho_{R}}(x+1,t)}{\overline{\rho_{R}}(x,t)}\\
 & =q^{2}\frac{1-\kappa}{1+\kappa+\frac{1}{t}}
\end{align}
The quotient $Q_{x}$ is always positive. It is easy to verify that for $-1<\kappa <v_{\text{B}} $
the quotient is greater than 1, and an increasing function of $x$. On the other hand for $v_{\text{B}}<\kappa<1$, this quotient is less than $1$, and a decreasing function
of $x$.

When $-1<\kappa<v_{\text{B}}$ we use
these facts to bound
\begin{align}
\overline{R}(x) & \leq\overline{\rho_{R}}(x,t)(1+Q_{x}^{-1}+Q_{x}^{-2}\ldots)\nonumber\\
 & =\overline{\rho_{R}}(x,t)\frac{1}{1-Q_{x}^{-1}}
\end{align}
On the other hand, it is immediate that $\overline{R}(x)\geq\overline{\rho_{R}}(x,t)$.
Noting that in the large $t$ limit $Q_{x}=\frac{(1+v_{\text{B}})(1-\kappa)}{(1-v_{\text{B}})(1+\kappa)}$
we find
\begin{align}\label{bound1}
\left|\frac{\overline{R}(x,t)}{\overline{\rho_{R}}(x,t)}-1\right| & \leq\frac{Q_{x}^{-1}}{1-Q_{x}^{-1}}\nonumber\\
 & =\frac{\left(1+\kappa\right)\left(1-v_{\text{B}}\right)}{2\varepsilon}\punc{.}
\end{align}

Hence $c_{0}\leq1+\frac{\left(1+\kappa\right)\left(1-v_{\text{B}}\right)}{2\varepsilon}$is
an $O(1)$ constant.

Similarly we consider $v_{\text{B}}<\kappa<1$. Using $1=\sum_{y}\rho_{R}(y,t)$
it follows that $\overline{R}(x)=1-\sum_{y>x}\rho_{R}(y,t)$. In the
present case $Q_{x}<1$ and is straightforward to derive a similar
bound
\begin{align}\label{bound2}
\left|\frac{1-\overline{R}(x,t)}{\overline{\rho_{R}}(x,t)}-1\right| & \leq\frac{Q_{x}}{1-Q_{x}}=\frac{\left(1-\kappa\right)\left(1+v_{\text{B}}\right)}{2\varepsilon}
\end{align}
Hence $c_{1}\leq1+\frac{\left(1-\kappa\right)\left(1+v_{\text{B}}\right)}{2\varepsilon}$
is an $O(1)$ number in the large $t$ limit. Note that near the edge
of the light cone $\kappa=\mp 1 $ respectively, the results \eqnref{eq:precisebounds}
become increasingly exact as each of the bounds \eqnref{bound1} and \eqnref{bound2} become tighter. On the other hand, as we approach the front $\varepsilon\rightarrow 0$ the bounds become looser and \eqnref{eq:precisebounds} is less reliable -- in this regime, the near front expansion \eqnref{eq:Rdefn} becomes more useful.

\section{Application to OTOC}\label{app:otoc}
Following \secref{ss:randomcircuitOTOC}, the quantity $f(s,\tau)\equiv1-\overline{\mathcal{C}}(s,\tau)$ can
be written as
\be
f(s,\tau)  =  \sum_{\mu}\overline{\left|c^{\mu}(\tau)\right|^{2}}\cos\theta_{\mu,Z_{s}},
\ee
where $e^{i\theta_{\mu,Z_{s}}}$ is a $q^{\text{th}}$ root of unity
acquired by commuting $\sigma^{\mu}$ past $Z_{s}$. We can reparameterize
the sum by summing over the left right endpoints of the Pauli string
$\mu$
\begin{align}
f(s,\tau) & =\sum_{l=-t}^{t-1}\sum_{r=-t}^{t-1}h(l,r)\sum_{\mu\in\mathfrak{F}(l,r)}\cos\theta_{\mu,Z_{s}},\label{eq:1motoc}
\end{align}
where $h(l,r)$is simply the average $\overline{\left|c^{\mu}(\tau)\right|^{2}}$
for a Pauli string $\mu$ with left/right endpoint -- recall that
this value does not depend on the internal structure of $\mu$(see
below eqnref or above ), only on the endpoints of $\mu$. Here $\mathfrak{F}(l,r)\equiv\{\mu:\text{supp}(\mu)=[l,r]\}$.
Those intervals $\left[l,r\right]$ such that $s\in(l,r)$ do not
on net contribute to this sum, because there results a sum over $q^{\text{th}}$
roots of unity disappears
\[
h(l,r)\sum_{\mu\in\mathfrak{F}(l,r)}\cos\theta_{\mu,Z_{s}}=0.
\]
There are also contributions to \eqref{eq:1motoc} which arises when
$s$ is on the left and/or right edge of an interval i.e., $s=l$
or $s=r$.

First, perform the sum over $\mu$ in \eqref{eq:1motoc} for the case $r=s>l$
\begin{align}
&\sum_{\mu\in\mathfrak{F}(l,s)}\cos\theta_{\mu,Z_{s}}\nonumber \\
=& -(q^{-2}\delta_{l<s-1}+\delta_{l=s-1,s}(q^{2}-1)^{-1})\left|\mathfrak{F}(l,s)\right| \punc{.} \label{eq:r>l}
\end{align}
Next, perform the sum over $\mu$ in \eqref{eq:1motoc} for the case $l=s<r$
\begin{align}
&\sum_{\mu\in\mathfrak{F}(s,r)}\cos\theta_{\mu,Z_{s}}  \nonumber\\
  =&-(q^{-2}\delta_{r>s+1}+\delta_{r=s,s+1}(q^{2}-1)^{-1}\left|\mathfrak{F}(s,r)\right|) \punc{.} \label{eq:l<r}
\end{align}

Combining these sum identities \eqnref{eq:r>l}, \eqnref{eq:l<r} back into \eqnref{eq:1motoc} yields
(after some rearrangement)

\begin{align*}
&f(s,\tau)\nonumber \\
 =&\left(\sum_{l\leq r\leq s-1}+\sum_{s+1\leq l\leq r}-\sum_{r=s}\sum_{l\leq s}q^{-2}-\sum_{l=s}\sum_{r\geq s}q^{-2}\right)\times \\
 &\times h(l,r)\left|\mathfrak{F}(l,r)\right|\\
  +&\frac{1}{q^{2}}\left(-(q^{2}-1)(h(s-1,s)+h(s,s+1))+\left(q^{2}-2\right)h(s,s)\right)
\end{align*}

The first line is

\be
f_{1^{\text{st}}}(s,\tau)  =\overline{R}(s-1)+\overline{L}(s+1)-q^{-2}\left(\overline{\rho_{R}}\left(s\right)+\overline{\rho_{L}}\left(s\right)\right).
\ee

It is readily verified that the second line disappears exponentially quickly in $\tau$ as $(2q/(q^2+1))^{2\tau}$, for any
$s$ using  \eqnref{eq:operatorweight}. Hence, in the $\tau\rightarrow\infty$ fixed $s/\tau$ limit,
we can approximate 
\begin{align}
f(s,\tau) & \approx f_{1^{\text{st}}}(s,\tau)\nonumber\\
 & =\overline{R}(s-1,\tau)-q^{-2}\overline{\rho_{R}}\left(s,\tau\right)\\
 & +\overline{R}(-s-2,\tau)-q^{-2}\overline{\rho_{R}}\left(-s-2,\tau\right)
\end{align}
where we used the fact $\overline{L}(s+1)=\overline{R}(-s-2)$ and
$\overline{\rho_{L}}(s+1)=\overline{\rho_{R}}(-s-2)$. For $0<s<\tau/v_{\text{LC}}$,
and for any $q$, we can use \eqnref{eq:Rdefn} and
\eqnref{eq:rho_realspace} to argue that the second
line is suppressed by a factor of $\sim q^{-s}$ relative to first
line. Therefore, provided $\kappa>0$ in the $\tau\rightarrow\infty$
limit, the OTOC behaves as

\be
f(s,\tau)\approx\overline{R}(s-1,\tau)-q^{-2}\overline{\rho_{R}}\left(s,\tau\right)
\ee
up to exponentially small corrections in $\tau$.

\section{Exact operator spread coefficients}\label{App:cmunu}
In what follows we give an exact expression for the averaged operator spread coefficients, and a sketch of the derivation. We leave a more detailed derivation to future work. The averaged operator spread coefficients are defined as
\begin{align}
\overline{|c_{\vec{\mu}}^{\vec{\nu}}\left(t\right)|^{2}} & =\int\left[\prod_{\tau=1}^{t,\leftarrow}\prod_{j=1}^{L}d_{\text{Haar}}W\left(2j+p_{\tau},\tau\right)\right]\nonumber \\
 & \times\left|\frac{1}{q^{M}}\text{tr}\left(\sigma^{\vec{\nu}\dagger}U^{\dagger}(t)\sigma^{\vec{\mu}}U(t)\right)\right|^{2}\label{eq:Haaraveragec}\punc{.}
\end{align}
The Haar-averaging can be performed explicitly using the identity~\eqref{eq:Haaraverage}. After averaging, each two-site gate can be represented by a classical, Ising-like variable taking only two possible values. Due to the geometry of the circuit, these Ising variables form a triangular lattice. Eq.~\eqref{eq:Haaraveragec} becomes a classical partition function, i.e. a sum over all possible spin configurations. The partition function consists of two edge parts, which depend on the configurations on the first (last) layer and the Pauli strings $\vec{\nu}$ ($\vec{\mu}$), and a bulk part which is independent of the Pauli strings in question. Due to the Haar-averaging, the only information that remains in the partition function about the strings $\vec{\mu}$ and $\vec{\nu}$ is which sites they act on non-trivially. The bulk transfer matrix enforces a light cone structure on the spin variables. A light cone with velocity $v_\text{LC} = 1$ emanates from the two endpoints of the string $\vec{\nu}$ such that all spins outside of the light cone have to point up (otherwise the configuration has zero weight in the partition function for $\overline{|c_{\vec{\mu}}^{\vec{\nu}}(t)|^{2}}$).

In the case of an initial Pauli operator acting on a single site the partition function for the operator spreading can be evaluated exactly. In this case the fact that $\vec{\mu}$ acts on one site only yields a boundary condition for the partition function wherein in the first row there is a single spin pointing down while all others point up. The partition function then becomes a sum over all possible ways this initially one-site domain can spread within the light cone, as shown in Fig.~\ref{fig:domain_spread}. Furthermore, the bulk interaction terms are only non-trivial at the boundary between the two domains and consequently give the same contribution for all domain configurations with the same depth. Thus the calculation simplifies to counting the possible domain configurations which can be done by considering it as a two-particle random walk for the two endpoints of the domain.

 \begin{figure}[h!]
 \centering
  	\includegraphics[width=0.2\textwidth]{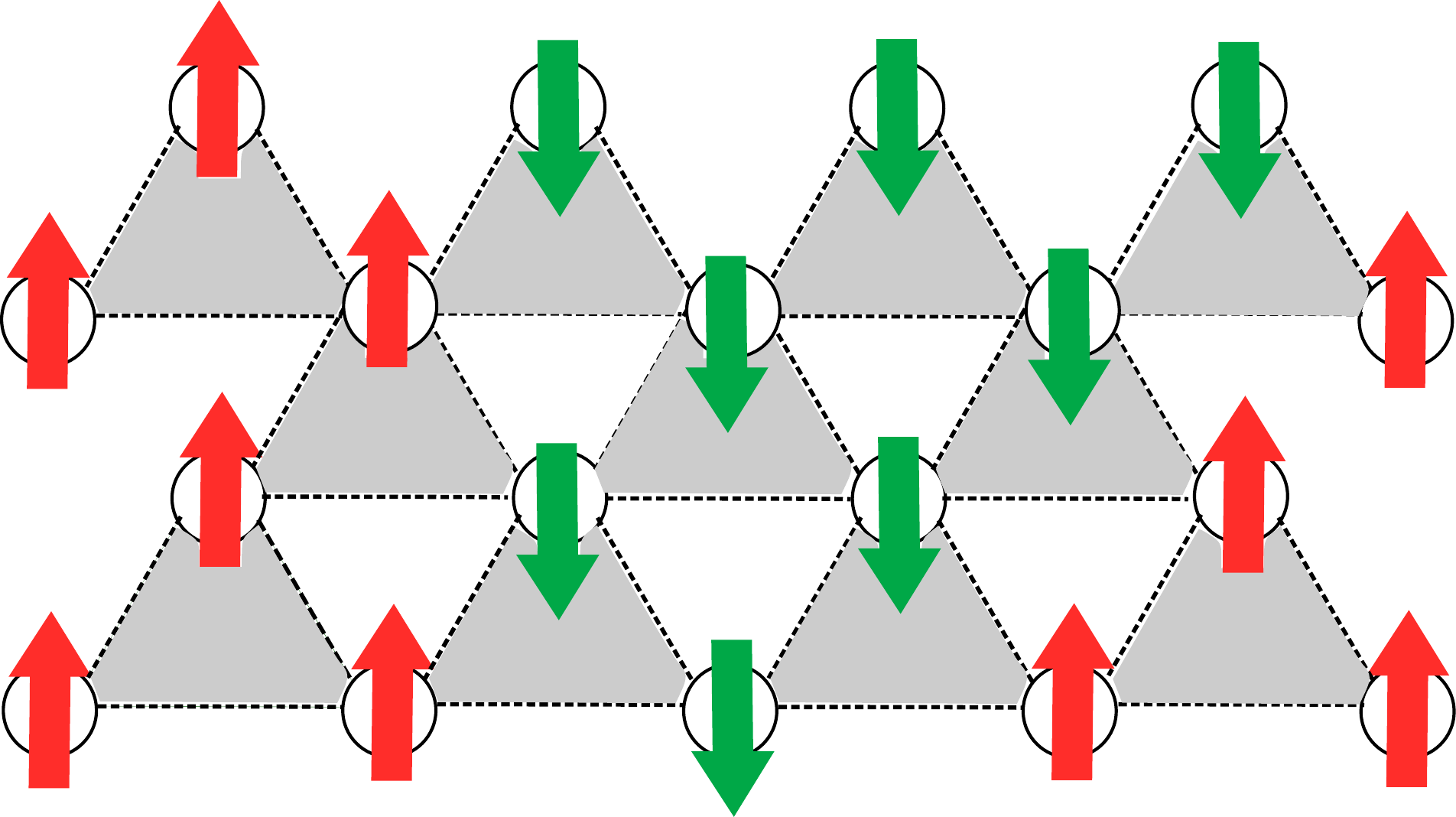} 
\caption{Example of a classical spin configuration contributing to the operator spreading coefficients of an initial one-site Pauli operator. All such configurations have a single domain of down spins spreading inside the light cone and each of these configurations contributes equally to $\overline{|c^{\vec{\nu}}(t)|^{2}}$.}
 \label{fig:domain_spread}
 \end{figure}

The calculation outlined above yields all average squared coefficients $\overline{|c_{\vec{\mu}}^{\vec{\nu}}(t)|^{2}}$ where $\mu_j = 0$ if $j\neq 0$. In the following we simplify the notation by dropping the first index and denoting these as $\overline{|c^{\vec{\nu}}(t)|^{2}}$. The exact formula for these squared coefficients is given in Eq.~\eqref{eq:Cmunufull}. A surprising property of this formula only depends on the positions of the right and left endpoints $l,r$ of the Pauli string $\vec{\nu}$, and not on more detailed information concerning the internal structure of the string i.e., $\overline{|c^{\vec{\nu}}(t)|^{2}}=h(l,r)$. This is a consequence of the Haar averaging which in each step washes out all memory of the internal structure.

Form a circuit with even number $D$ layers. Number the spaces between
the two site unitaries in the last layer $-D/2,1-D/2,\ldots D/2$.
The support of an operator string $\vec{\nu}$ can be represented
by $x,y\in\{ -D/2,1-D/2,\ldots+D/2\} $ with $x<y$. The
average square coefficient is obtained by plugging $x,y$ into the
formula.

\begin{align}
\overline{|c^{\vec{\nu}}|^{2}}  =& \frac{1}{(1+q^{-2})^{2(D-1)}} \frac{q^{-2(y-x)-2D}}{1-q^{-4}} \mathcal{J}(x,y,D)\nonumber \\
\mathcal{J}(x,y,D) \equiv& \sum_{0\leq a\leq\frac{D}{2}+x} \sum_{0\leq b\leq\frac{D}{2}-y}\binom{D}{\frac{D}{2}-b-y}\nonumber\\
&\times \binom{D}{\frac{D}{2}-a+x} q^{-2a-2b}\frac{b+y+a-x}{D}
\label{eq:Cmunufull}
\end{align}

Note that this expression depends only on $x,y$ and $q$ so we also
denote it as $\overline{|c_{\vec{\nu}}^{\mu_{1}}|^{2}}=h(x,y,D)$
where we drop the $q$ dependence for simplicity. These expressions
are of-course complicated. We note as an aside that this formula has
a slightly neater expression in terms of hypergeometric functions.

\subsection{Useful limits}
Let us calculate the weight on an operator with endpoints $x,y$, in the large $D$ limit (in circuit co-ordinates) starting from \eqnref{eq:Cmunufull}. Re-express the weight as
\begin{align*}
\overline{\left|c_{\vec{\nu}}^{\mu_{1}}\right|^{2}} & =q^{-2D}\times\frac{1}{\left(1+q^{-2}\right)^{2(D-1)}}\frac{1}{1-q^{-4}}\mathcal{G}\left(x,y,D\right)\\
\mathcal{G}\left(x,y,D\right) & \equiv\sum_{0\leq a\leq\frac{D}{2}+x}\times\sum_{0\leq b\leq\frac{D}{2}-y}q^{-2b-2y}q^{-2a+2x}\\
&\times\binom{D}{\frac{D}{2}-b-y}\binom{D}{\frac{D}{2}-a+x}\frac{b+y+a-x}{D}
\end{align*}

Defining:
\begin{align*}
\mathcal{H}(x,y,D) \equiv q^{-2D}(1+q^{2})^{2D}v_{D}(\frac{D}{2}+x)v_{D}(\frac{D}{2}-y)\\
\end{align*}
where 
\be
v_{D}(\Delta) \equiv (1+q^{2})^{-D}\sum_{j=0}^{\Delta}{D \choose j}q^{2j}
\ee
we find that
\be
\frac{d\mathcal{H}(x,y,D)}{dq} =-2Dq^{-1}\times\mathcal{G}
\ee
Putting this altogether, we have
\begin{align}
h(x,y,D) & =-\frac{1}{2D}\frac{q^{-2D}}{\left(1+q^{-2}\right)^{2(D-1)}}\frac{1}{1-q^{-4}} \nonumber\\
&\times\frac{d}{d\log q}[q^{-2D}(1+q^{2})^{2D}v_{D}(\frac{D}{2}+x)v_{D}(\frac{D}{2}-y)] \label{hfunct}
\end{align}
For fixed $x,y$ taking the large $D$ limit, the $v_D$ functions can be approximated as (using the same reasoning as in \eqnref{eq:Rapprox} ):
\begin{align*}
v_{D}(\frac{D}{2}+x) & \sim\sqrt{\frac{2}{\pi D}}\frac{q^{2(x+1)}\left(\frac{2q}{q^{2}+1}\right)^{D}}{q^{2}-1}
\end{align*}
Plugging in this approximation we find.
\begin{align}
h(x,y)  &\sim \left(\frac{2q}{q^{2}+1}\right)^{2D}\nonumber\\
&\times\frac{(q^{2}+1)((q^{2}-1)(y-x)+2)q^{2(x-y+2)}}{\pi D^{2}\left(q^{2}-1\right)^{4}} \label{eq:operatorweight}
\end{align}
So, fixing $x,y$, the weight on an operator with endpoints $x,y$ decays exponentially quickly $~\left(\frac{2q}{q^{2}+1}\right)^{2D}$ with large $D$. 

\section{Long time correlations}\label{ss:longtimerandom}
In \secref{s:intro} we anticipated that random unitary circuits should `heat to infinite temperature', much like ergodic Floquet circuits. Here we back up this claim by examining the long time behaviour of various correlations functions, demonstrating that they relax to their expected infinite temperature values. For simplicity we consider 1 and 2 point functions for specific 1-site Pauli operators of form $\sigma_0^\alpha$ -- although most of the results below follow for more general operators as well. Fix any initial state $\omega$ and times $t_2 , t_1$. We find that
\be
\overline{\langle{\sigma_0^\alpha(t_2) \sigma_z^\beta(t_1)}\rangle_\omega} =0\punc{.}
\ee
This follows from two observations. First, the operator in the expectation value can be written $U_{01} U_{12}\sigma_0^\alpha U_{21} \sigma_x^\beta U_{10}$ where $U_{ij} = U^{-1}_{ji}$ is shorthand for the unitary evolutions between times $t_i$ and $t_j$. Now $U_{12}$ is statistically independent from $U_{01}$, so we can average over these disjoint circuits independently. Provided $t_1 \neq t_2$ and $\alpha\neq 0$ it is straightforward to see that the Haar average $\overline{U_{12}\sigma_0^\alpha U_{21}}=0$. This can be re-expressed succinctly as $\overline{c^{\alpha}_{\mu}(t)}=0$ for all $t= t_2 - t_1\neq 0,\alpha \neq 0$. Note that this result is independent of the initial state $\omega$ and the value $\beta$ -- in particular we can recover the behavior of 1 point functions by setting $\beta=0$.

While the above correlation functions disappear on average, we may also quantify how their variance behaves at long times. Indeed, the variance decays exponentially in time, at least for random initial product states $\psi$. The variance is  
\be
\overline{\langle{\sigma_0^\alpha(t_2) \sigma_z^\beta(t_1)}\rangle^2_\omega} = \sum_{\mu} \overline{|c^{\alpha}_{\mu}(t)|^2}  |\langle{\sigma^\mu(t_1)\sigma_z^\beta(t_1)}\rangle_\omega|^2\punc{.}
\ee
We argue that this variance vanishes as we increase $t\rightarrow\infty$ while fixing $t_1$. We can show this rigorously for an infinite temperature state in \appref{app:infinitetemperature}.

For a random product state we have a less rigorous argument which proceeds as follows. First, given any $\epsilon>0$, it will be true that for sufficiently long times $t$, all of the weight of $\alpha$ is invested in strings $\mu$ with left/right endpoints $l < -(v_{\text{B}}-\epsilon)t, r>(v_{\text{B}}-\epsilon)t$ respectively, up to exponentially small corrections in $t$. These statements follow from \eqnref{eq:rhonearfront}. Second, as $\alpha$ is a 1-site operator, $ \overline{|c^{\alpha}_{\mu}(t)|^2}$ only depends on the endpoints of $\mu$ rather than the detailed internal structure (See the discussion under Eq.~\eqref{eq:OTOC_from_c}). Hence, up to exponentially small corrections in time, $\sigma_0^\alpha(t)$ is made up of an equal amplitude superposition of all operators $\mu$ with left/right endpoints near $\mp v_{\text{B}} t$ respectively. The vast majority of such strings contain an extensive $O(v_{\text{B}} t)$ number of Pauli operators. The  expectation values of such strings on a random product states is exponentially decaying in the number of Pauli operators. As a typical $\mu$ string contains $O(v_{\text{B}} t)$ Pauli operators, we find $ |\langle{\sigma^\mu\sigma_z^\beta}\rangle_\omega|^2 \sim e^{- \zeta v_{\text{B}} t }$ for some constant $\zeta$. Hence for $t_1$=0, $\overline{\langle{\sigma_0^\alpha(t_2) \sigma_z^\beta(0)}\rangle^2_\psi} \sim e^{-\zeta v_{\text{B}} t}$.  For $t_1$ nonzero, we expect even more marked decay $\overline{\langle{\sigma_0^\alpha(t_2) \sigma_z^\beta(t_1)}\rangle^2_\psi} \sim e^{-\zeta v_{\text{B}} (t+t_1)}$ because the support of $\sigma^\mu\sigma_z^\beta$ is further increased under time evolution.

\subsection{Infinite temperature results}\label{app:infinitetemperature}
Consider the variance of the infinite temperature expectation value function 
\begin{align}
\overline{\tr (2^{-L}(\sigma_0^\alpha(t_2) \sigma_z^\beta(t_1))^2} &= \sum_{\mu} \overline{|c^{\alpha}_{\mu}(t)|^2}  |2^{-L} \tr (\sigma^\mu\sigma_z^\beta) |^2\nonumber\\
&= \overline{|c^{\alpha}_{\beta_z}(t)|^2} \punc{.}
\end{align}
This is a Haar averaged single site weight. In the large $t$,  $z/t\rightarrow 0$ limit, we have an expression for this quantity (in the coarse grained lattice basis). It is approximately equal to \eqnref{eq:operatorweight} using $x= z/2, y= x+1,D=2 t$. Asymptotically then the variance in the infinite temperature average decays exponentially as
\be
\sim \left(\frac{2q}{q^{2}+1}\right)^{4t}\punc{.}
\ee

\section{Haar Identities}
Behold the following Haar moments for $d\times d$ random unitary
matrices. 
\[
\int d_{H}U\times U_{i_{1}i_{2}}U_{\overline{i}_{1}\overline{i}_{2}}^{*}=\frac{1}{d}\delta_{i_{1}\overline{i}_{1}}\delta_{i_{2}\overline{j}_{2}}
\]
The following higher moment is useful for deriving the average square operator spread coefficients. 
\begin{align*}
& \int d_{H}U\times U_{i_{1}^{1}i_{2}^{1}}U_{\overline{i}_{1}^{1}\overline{i}_{2}^{1}}^{*}U_{i_{1}^{2}i_{2}^{2}}U_{\overline{i}_{1}^{2}\overline{i}_{2}^{2}}^{*} \\
& =\frac{1}{d^{2}-1}\left(\delta_{i_{1}^{1}\overline{i}_{1}^{1}}\delta_{i_{1}^{2}\overline{i}_{1}^{2}}\times\delta_{i_{2}^{1}\overline{i}_{2}^{1}}\delta_{i_{2}^{2}\overline{i}_{2}^{2}}+\delta_{i_{1}^{1}\overline{i}_{1}^{2}}\delta_{i_{1}^{2}\overline{i}_{1}^{1}}\times\delta_{i_{2}^{1}\overline{i}_{2}^{2}}\delta_{i_{2}^{2}\overline{i}_{2}^{1}}\right)\\
 & -\frac{1}{d(d^{2}-1)}\left(\delta_{i_{1}^{1}\overline{i}_{1}^{2}}\delta_{i_{1}^{2}\overline{i}_{1}^{1}}\times\delta_{i_{2}^{1}\overline{i}_{2}^{1}}\delta_{i_{2}^{2}\overline{i}_{2}^{2}}+\delta_{i_{1}^{1}\overline{i}_{1}^{1}}\delta_{i_{1}^{2}\overline{i}_{1}^{2}}\times\delta_{i_{2}^{1}\overline{i}_{2}^{2}}\delta_{i_{2}^{2}\overline{i}_{2}^{1}}\right)
\end{align*}

This can be more elegantly expressed as a sum over elements of permutation
group $S_{2}$

\begin{eqnarray}
\int d_{H}U\times U_{i_{1}^{1}i_{2}^{1}}U_{\overline{i}_{1}^{1}\overline{i}_{2}^{1}}^{*}U_{i_{1}^{2}i_{2}^{2}}U_{\overline{i}_{1}^{2}\overline{i}_{2}^{2}}^{*} & = & \sum_{\sigma,\eta\in S_{2}}\text{Wg}(\eta)R(\eta\sigma)_{\overline{i}_{1}^{1}\overline{i}_{1}^{2}}^{i_{1}^{1}i_{1}^{2}}R(\sigma)_{\overline{i}_{2}^{1}\overline{i}_{2}^{2}}^{i_{2}^{1}i_{2}^{2}}\nonumber\\
\text{where, }\text{Wg}_{d}(\eta) & \equiv & \frac{1}{d^{2}-1}\left(\frac{-1}{d}\right)^{\delta_{\eta=(1,2)}}\label{eq:Haaraverage}
\end{eqnarray}

\section{Recurrence times in Translation invariant Clifford quantum cellular
automata (CQCA)}\label{app:linearrecurrence}
The aim of this section is to show that translation invariant Clifford circuits have linear in system size recurrence times (for a certain family of system sizes). We utilize the technology and formalism of Ref.~\onlinecite{Gutschow09}.
\begin{fact}
(Ref.~\onlinecite{Gutschow09} Theorem II.5) Clifford quantum circuits with translation
symmetry (unit cell size 1) are in correspondence with the set of
$2\times2$ matrices $t$ with elements in $\mathbb{Z}/2\mathbb{Z}\left[u,u^{-1}\right]$
(polynomials in $u,u^{-1}$ over ring $\mathbb{Z}/2\mathbb{Z}$) obeying
\begin{align*}
\det t & =u^{2a}\\
t_{ij}(u) & =u^{a}\times\text{Symmetric Laurent}[u,u^{-1}]\\
t_{11}\text{coprime }t_{21}, & t_{12}\text{coprime }t_{22}\\
\end{align*}
where by coprime, we mean the polynomials over ring $\mathbb{Z}/2\mathbb{Z}$
do not possess any common non-trivial factors. We say a clifford quantum
circuit is centered if $a=0$. \label{fact:cliffordclassification}
\end{fact}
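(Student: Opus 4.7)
The plan is to derive the three algebraic conditions directly from the fact that a Clifford QCA must preserve the commutation relations of Pauli strings, and to interpret the resulting structure over the translation-action ring $\mathbb{Z}/2\mathbb{Z}[u, u^{-1}]$. First I would set up the polynomial encoding. A single-qubit Pauli string on $\mathbb{Z}$, modulo phases, is described by two $\mathbb{Z}/2\mathbb{Z}$-valued sequences $(a_n),(b_n)$ specifying the presence of $X_n^{a_n}Z_n^{b_n}$ at each site; translation invariance then allows these to be encoded as Laurent polynomials $p(u) = \sum_n a_n u^n$ and $q(u) = \sum_n b_n u^n$, where multiplication by $u$ implements translation by one site. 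A translation-invariant Clifford automorphism is fully determined by the images of the two generators $X_0$ and $Z_0$, whose symplectic parts assemble into the columns of a $2\times 2$ matrix $t$ over $\mathbb{Z}/2\mathbb{Z}[u, u^{-1}]$.

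With this encoding in hand I would extract the listed constraints from symplectic preservation. Two Pauli strings $(p, q)$ and $(p', q')$ commute iff the constant term of $p(u)\overline{q'(u)} + q(u)\overline{p'(u)}$ vanishes, where $\overline{f}(u) \equiv f(u^{-1})$. Imposing $[X_0, X_m] = 0$ for all $m$ forces the \emph{entire} Laurent polynomial $t_{11}(u)\overline{t_{21}(u)}$ to equal its involution (working in characteristic two), hence to be palindromic; the analogous relation for $Z_0$ yields palindromicity of $t_{12}\overline{t_{22}}$. Combined with the observation that a consistent translation offset forces a common $u^a$ prefactor of each entry, one concludes that each $t_{ij}$ has the form $u^a \cdot (\text{symmetric Laurent})$. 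The canonical anticommutation $\{X_0, Z_0\} \neq 0$ together with $[X_0, Z_m] = 0$ for $m \neq 0$ compresses into the single polynomial identity $t_{11}\overline{t_{22}} + t_{21}\overline{t_{12}} = 1$, which after eliminating the palindromic identities reduces to $\det t \cdot \overline{\det t} = 1$; since the only units in $\mathbb{Z}/2\mathbb{Z}[u, u^{-1}]$ are the monomials $u^k$, this gives $\det t = u^{2a}$ with the parity of the exponent fixed by the common prefactor. The coprimality of $t_{11}$ and $t_{21}$ is necessary for surjectivity onto the $X_0$ sector: a nontrivial common factor $\phi(u)$ would confine the image of $X_0$ and all its translates to the proper sub-module $\phi \cdot \mathbb{Z}/2\mathbb{Z}[u, u^{-1}]^2$, so $X_0$ itself could never be recovered; the same argument applies to the $Z_0$ column.

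The main obstacle will be the sufficiency direction, i.e.\ showing that every matrix $t$ satisfying the three conditions actually arises from a genuine Clifford unitary. I would tackle this in two steps: first, use a B\'ezout identity in $\mathbb{Z}/2\mathbb{Z}[u, u^{-1}]$ (made possible by the coprimality) to construct an explicit symplectic inverse $t^{-1}$ of the same structural type, so that $t$ defines a bijective symplectic automorphism of the Pauli group modulo phases; second, lift this automorphism to a genuine Clifford unitary on the Pauli algebra by a standard Pauli-phase cocycle argument, with the palindromic structure and determinant condition ensuring the relevant $\mathbb{Z}/4\mathbb{Z}$-cocycle is exact. The delicate part will be this final lifting step, where one must verify that no phase obstruction arises; however, translation invariance reduces it to a finite check on the two generators, so the whole argument remains essentially algebraic.
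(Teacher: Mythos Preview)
The paper does not prove this statement: it is labeled a \emph{Fact}, attributed to Theorem~II.5 of the G\"utschow reference, and no independent argument is given. It is simply quoted as a black-box input to the subsequent Theorem on linear-in-$L$ recurrence times. There is therefore no proof in the paper against which to compare your proposal.

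For what it is worth, your outline follows the standard route (and presumably that of the cited reference): encode Pauli strings as pairs of Laurent polynomials over $\mathbb{Z}/2\mathbb{Z}[u,u^{-1}]$, read off $t$ from the images of $X_0$ and $Z_0$, and extract the three conditions from preservation of the symplectic form. The one place your sketch is genuinely thin is the passage from palindromicity of the \emph{products} $t_{11}\overline{t_{21}}$ and $t_{12}\overline{t_{22}}$ to the claim that each individual entry has the form $u^{a}\times(\text{symmetric Laurent})$ with a \emph{common} exponent $a$; the phrase ``consistent translation offset'' is not yet an argument, and in a full proof this step needs the determinant/unit structure of the Laurent ring rather than the commutation relations alone. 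The sufficiency direction (B\'ezout for the symplectic inverse, then lifting the phase cocycle) is pointed in the right direction.
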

\begin{fact}
From Ref.~\onlinecite{Gutschow09} Proposition II.13). A centered CSCA $t$ is periodic
with period c+2 if $\text{tr }t=c$ for$c\in\mathbb{Z}/2\mathbb{Z}$
(so the period is either $2$ or $3$). \label{fact:periodiccircuits}\end{fact}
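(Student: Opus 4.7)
The plan is to reduce the claim to the Cayley--Hamilton relation for the $2\times 2$ matrix $t$ described by Fact~\ref{fact:cliffordclassification}. Since $t$ is centered we have $\det t = 1$, and by hypothesis $\mathrm{tr}\, t = c$. Cayley--Hamilton holds for matrices over any commutative ring (the adjugate proof goes through verbatim), so it applies to $t \in \mathrm{Mat}_2(\mathbb{Z}/2\mathbb{Z}[u,u^{-1}])$ and yields $t^2 - c\, t + I = 0$. Because we are in characteristic two the minus sign is harmless, and the identity rearranges to
\begin{equation*}
t^2 = c\, t + I.
\end{equation*}

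From this single identity the two cases of the statement fall out by direct manipulation. First I would treat $c=0$: here $t^2 = I$ is immediate, so the period of $t$ divides $2$. For $c=1$ I would multiply $t^2 = t + I$ on the left by $t$ and use $2t = 0$ in characteristic two to obtain $t^3 = t^2 + t = (t + I) + t = I$, so the period divides $3$. In each case the period divides the claimed value $c+2 \in \{2,3\}$.

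The only loose end is checking that the period is exactly $c+2$ rather than a proper divisor, which would force $t = I$. That is the trivial CSCA and is implicitly excluded from the statement; in any nontrivial situation the relations $t^2 = I$ (for $c=0$) and $t^3 = I$ (for $c=1$) are the minimal ones and the period is exactly $c+2$. I do not expect a real obstacle beyond this bookkeeping: the only substantive input beyond the classification of Fact~\ref{fact:cliffordclassification} is that characteristic-two arithmetic simultaneously makes $-I = I$ and $2t = 0$, which is precisely what lets the $c=1$ case close after one extra multiplication. The translation-invariance assumption enters only to put $t$ into the matrix form of Fact~\ref{fact:cliffordclassification} in the first place; once that reduction has been made, no further structural information about the underlying circuit is needed.
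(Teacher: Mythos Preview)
Your proof is correct and follows essentially the same route as the paper: use Cayley--Hamilton together with $\det t=1$ (from the centered hypothesis) to get $t^{2}=(\mathrm{tr}\,t)\,t+I$ over $\mathbb{Z}/2\mathbb{Z}[u,u^{-1}]$, then read off $t^{2}=I$ when $c=0$ and compute $t^{3}=t(t+I)=t^{2}+t=I$ when $c=1$. The paper's proof is just a terser version of what you wrote; your extra remark distinguishing ``period divides $c+2$'' from ``period equals $c+2$'' is a reasonable clarification that the paper leaves implicit.
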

\begin{proof}
This is actually quite straightforward. From Fact \ref{fact:cliffordclassification},
we have $t^{2}=\text{tr}\left(t\right)t+1$. If $\text{tr}(t)=0$
then $t^{2}=1$. Else $t^{3}=t(t+1)=t+1+t=1$. \end{proof}
\begin{theorem}
Translation invariant centered CQCAs $U$ have linear in system size
recurrence times (at most $t_{\text{rec}}=12L$), at least for system
sizes $L=2^{n}$.\end{theorem}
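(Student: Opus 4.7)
The plan is to translate the recurrence question into the matrix formalism of Fact~\ref{fact:cliffordclassification}. A centered translation-invariant CQCA on $L$ sites corresponds to a $2\times 2$ matrix $t$ with entries in the ring $R_L \equiv \mathbb{F}_2[u,u^{-1}]/(u^L-1)$ satisfying $\det t = 1$, and the statement $U^{t_{\text{rec}}} \propto \mathbf{1}$ becomes the matrix identity $t^{t_{\text{rec}}} = 1$ in $SL_2(R_L)$. The crucial input from the hypothesis $L = 2^n$ is that, in characteristic $2$, $u^L - 1 = (u+1)^L$, so $R_L \cong \mathbb{F}_2[u]/(u+1)^L$ is a local ring whose maximal ideal $\mathfrak{m} = (u+1)$ satisfies $\mathfrak{m}^L = 0$, with residue field $R_L/\mathfrak{m} \cong \mathbb{F}_2$.

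The proof then proceeds in two steps. \emph{Step one (reduction mod $\mathfrak{m}$)}: the reduction map sends $t$ to some $\bar t \in SL_2(\mathbb{F}_2)$, a group of order $6$. Either using $|SL_2(\mathbb{F}_2)| = 6$, or applying Fact~\ref{fact:periodiccircuits} to $\bar t$ (whose trace is automatically a scalar in $\mathbb{F}_2$) to obtain $\bar t^{c_0+2} = 1$ with $c_0 = \text{tr}(\bar t)$, one concludes $\bar t^6 = 1$. Hence $t^6 = 1 + N$ where $N$ is a $2\times 2$ matrix with all entries in $\mathfrak{m}$. \emph{Step two (lift via Frobenius)}: In characteristic $2$, $\binom{2^n}{k} \equiv 0 \pmod 2$ for $0 < k < 2^n$, so $(1+N)^{2^n} = 1 + N^{2^n}$. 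A short induction using the fact that $\mathfrak{m}$ is an ideal shows that $N^j$ has entries in $\mathfrak{m}^j$ (since $(N^{j+1})_{ab} = \sum_c (N^j)_{ac} N_{cb} \in \mathfrak{m}^j \cdot \mathfrak{m} = \mathfrak{m}^{j+1}$). Therefore $N^{2^n}$ has entries in $\mathfrak{m}^{2^n} = \mathfrak{m}^L = 0$, so $t^{6L} = 1$\textemdash comfortably within the claimed bound $t_{\text{rec}} \leq 12L$.

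I expect no single step to be computationally hard; the main subtleties are bookkeeping. First, one must verify that the matrix formalism of Fact~\ref{fact:cliffordclassification} descends compatibly to the quotient ring $R_L$; the coprimality conditions on the $t_{ij}$ over $\mathbb{F}_2[u,u^{-1}]$ can fail in $R_L$ since previously coprime polynomials may share factors modulo $u^L - 1$, but this only affects which $t \in SL_2(R_L)$ arise from genuine CQCAs, not whether the bound holds for those that do. Second, the distinction between $U^{t_{\text{rec}}} = \mathbf{1}$ and $U^{t_{\text{rec}}} \propto \mathbf{1}$ (an overall global phase, invisible at the level of the adjoint action on Paulis) may account for the slack between the sharper bound $6L$ obtained above and the stated $12L$\textemdash lifting the matrix identity $t^{t_{\text{rec}}} = 1$ back to a unitary identity could cost an extra factor of $2$ to absorb a global sign, which is presumably the reason for the factor $12$ rather than $6$ in the statement.
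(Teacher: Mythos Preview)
Your core argument is correct and takes a genuinely different route from the paper. The paper iterates Cayley--Hamilton in characteristic $2$ to show $\mathrm{tr}(t^{2^k}) = (\mathrm{tr}\,t)^{2^k}$, then applies Frobenius to the \emph{trace polynomial} $p = \mathrm{tr}\,t = \sum_r c_r u^r$ to get $p^{2^n} = \sum_r c_r u^{2^n r} = c_0$ once $u^L = 1$, and finally invokes Fact~\ref{fact:periodiccircuits} on $t^L$ (whose trace is now the constant $c_0$) to obtain $t^{L(c_0+2)} = 1$. You instead exploit that $R_L = \mathbb{F}_2[u]/(u+1)^L$ is local when $L = 2^n$: reducing modulo $\mathfrak{m} = (u+1)$ lands $t$ in $SL_2(\mathbb{F}_2)$ (order $6$), and then Frobenius at the \emph{matrix} level, $(1+N)^{2^n} = 1 + N^{2^n}$ together with $N^L \in M_2(\mathfrak{m}^L) = 0$, finishes. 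Your approach is more structural and arguably cleaner; it also makes transparent that the symmetry and coprimality conditions in Fact~\ref{fact:cliffordclassification} play no role---only $\det t = 1$ and $L = 2^n$ matter.

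One correction to your last paragraph: the correspondence ``$U^m \propto \mathbf{1} \Leftrightarrow t^m = 1$'' that you assert at the outset holds only in the forward direction. The matrix $t$ tracks the permutation of Pauli strings \emph{up to phase}, so $t^m = 1$ only guarantees $U^m \sigma^{\vec\mu} (U^m)^{-1} = \pm\,\sigma^{\vec\mu}$ with $\vec\mu$-dependent signs; this is strictly weaker than $U^m \propto \mathbf{1}$. The extra factor of $2$ in going from $6L$ to $12L$ is needed to square away these Pauli-dependent signs, after which $U^{2m}$ commutes with every Pauli and is therefore a scalar. This is exactly the step the paper takes, and it is not about a single global phase as you suggest.
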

\begin{proof}
Let $t$ be the $2\times2$ matrix corresponding to clifford unitary
$U$. By the Cayley Hamilton (CH) theorem (which holds for matrices
over arbitrary rings), and the fact that $\det t=1$ we have

\[
t^{2}=\text{tr}\left(t\right)t+1
\]

where the matrices are written over the ring of Laurent polynomials
with $\mathbb{Z}/2\mathbb{Z}$ coefficients i.e., $\mathbb{Z}/2\mathbb{Z}\left[u,u^{-1}\right]$.
Recall from Fact \ref{fact:cliffordclassification} that
for a centered circuit, the elements of $t$ are \emph{symmetric}
Laurent polynomials. Note that the CH theorem can be iterated

\begin{eqnarray*}
t^{4} & = & \left(\text{tr}\left(t\right)t+1\right)^{2}\\
 & = & t^{2}\text{tr}\left(t\right)^{2}+1+2\times(\text{stuff})\\
 & = & t^{2}\text{tr}\left(t\right)^{2}+1\\
 & = & t\text{tr}\left(t\right)^{3}+\text{tr}\left(t\right)^{2}+1\\
 & = & tp^{3}+p^{2}+1
\end{eqnarray*}

where for convenience we denote $p\equiv\text{tr}\left(t\right)$,
which is of course a symmetric Laurent polynomial. Proceeding inductively,
and squaring successive equations, one can show

\begin{align*}
t^{2^{k}}= & t\times p{}^{2^{k}-1}+1\sum_{a=1}^{k}p^{2^{k}-2^{a}}
\end{align*}

Now consider the trace of $t^{2^{k}}$:

\begin{eqnarray*}
\text{tr}\left(t^{2^{k}}\right) & = & p{}^{2^{k}}+2\times\sum_{a=1}^{k}p^{2^{k}-2^{a}}\\
 & = & p^{2^{k}}
\end{eqnarray*}

Now we can write

\[
p=\sum_{r}c_{r}u^{r}
\]

where $c_{r}=c_{-r}\in\mathbb{Z}/2\mathbb{Z}$. Now iteratively square
this expression. The cross terms disappear because the ring is $\mathbb{Z}/2\mathbb{Z}$

\begin{eqnarray*}
p^{2} & = & \sum_{r}c_{r}u^{2r}\\
p^{4} & = & \sum_{r}c_{r}u^{4r}\\
 & \ldots\\
p^{2^{k}} & = & \sum_{r}c_{r}u^{2^{k}r}
\end{eqnarray*}

For a system of size $L$, and periodic boundary conditions, the constraint
$u^{L}=u^{0}=1$ is imposed on our polynomial ring. Hence, setting
$L=2^{n}$ and $k=n$ we get

\[
\text{tr}\left(t^{2^{n}}\right)=p^{2^{n}}=\sum_{r}c_{r}=c_{0}
\]

But $c_{0}$ is just a constant. Using Fact \ref{fact:periodiccircuits},
and the fact $\text{tr}t^{2^{n}}$ a constant, we have $U^{L=2^{n}}$
is a periodic circuit. By this, G{\"u}tschow \textit{et al}. mean that $t^{2^{n}(c_{0}+2)}=\text{1}$ is
the identity matrix. This implies that $U^{2^{n}(c_{0}+2)}$ does
not permute Pauli matrices -- it only multiplies them by phases (which
can only be $\pm1$ in order to preserve generating relations for
Pauli matrices). This in turn implies that $U^{2^{n+1}(c_{0}+2)}\propto1$,
the identity matrix on the many-body Hilbert space. Hence, as $c_{0}+2=2,3$
both divide $6$, we certainly have $U^{2^{n+1}\times6}\propto1$.
In other words $U^{12L}\propto1$ for $L=2^{n}$, and any clifford
circuit obeying the conditions of this theorem. The upshot is that
all such clifford circuits obeying the conditions of the theorem have
a linear in system size recurrence time $t_{\text{rec}}\leq 12 L$ , for $L=2^{n}$.\end{proof}
\begin{corollary}
Translation invariant CQCAs have, on average, and exponential in system
size level degeneracy. \end{corollary}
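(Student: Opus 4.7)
The plan is to deduce the corollary directly from the preceding theorem by a pigeonhole argument on the spectrum of $U$. Since the theorem gives $U^{12L}\propto 1$ for $L=2^{n}$, every eigenvalue of $U$ must be a $12L$-th root of unity (up to the overall proportionality phase), so there are at most $12L$ distinct eigenvalues available.

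My first step is to pass from the operator-level identity $U^{12L}\propto 1$ to the spectral statement. Concretely, writing the proportionality constant as a phase $e^{i\phi}$, diagonalizing $U=\sum_k e^{i\lambda_k}|k\rangle\langle k|$ gives $12L\lambda_k\equiv\phi\pmod{2\pi}$, so each $\lambda_k$ lies in the finite set $\{(\phi+2\pi m)/(12L):m=0,\ldots,12L-1\}$ of size $12L$. Thus the spectrum, which is a multiset of $\dim\mathcal H=q^{L}=2^{L}$ eigenvalues (recall $q=2$ for the Clifford setting considered), is supported on at most $12L$ distinct values.

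Next, I would apply pigeonhole to the multiplicities. If $N_{\text{dist}}\le 12L$ denotes the number of distinct eigenvalues actually occurring, the average multiplicity is
\begin{equation}
\bar d \;=\; \frac{2^{L}}{N_{\text{dist}}} \;\ge\; \frac{2^{L}}{12L},
\end{equation}
which is manifestly exponential in $L$. This is the claim of the corollary.

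I do not anticipate a genuine obstacle here, since the corollary is an essentially immediate spectral consequence of the theorem; the only mild subtlety is tracking the overall $U(1)$ phase in ``$U^{12L}\propto 1$'' to confirm that the eigenvalues are $12L$-th roots of a common phase rather than a more complicated set, and noting that the argument as stated holds on the subsequence $L=2^n$ inherited from the theorem. If one wanted to upgrade ``on average'' to ``typical,'' one would also need to rule out that nearly all the spectral weight concentrates on $O(1)$ of the available roots of unity, but the stated corollary only requires the averaged bound, so the pigeonhole step above suffices.
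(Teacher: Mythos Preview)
Your proposal is correct and follows essentially the same pigeonhole argument as the paper: from $U^{12L}\propto 1$ one deduces at most $12L$ distinct eigenvalues among $2^{L}$ states, hence average degeneracy $\ge 2^{L}/(12L)$. Your treatment is in fact slightly more careful, explicitly handling the global phase in ``$\propto 1$'' that the paper dismisses with a ``WLOG.''
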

\begin{proof}
Straightforward. There are $2^{L}$ states in the Hilbert space. The
linear in system size $\kappa L$ recurrence time (e.g., $\kappa=12$) means the eigenvalues
are WLOG $\kappa L$-th roots of unity. Hence, the unitary has at
most $\kappa L$ eigenvalues. Hence, the average level degeneracy
is $2^{L}/\kappa L$ -- exponentially large in system size. 
\end{proof}

\end{document}